\documentclass[11pt,runningheads]{llncs}

\usepackage[T1]{fontenc}
\usepackage{amsmath,amssymb,mathtools}
\usepackage[margin=1in]{geometry}
\usepackage{graphicx}
\usepackage{url}
\usepackage{microtype}

\newcommand{\E}{\mathbb{E}}

\begin{document}

\title{Uniform-Price Mechanism Design for Automated Market Makers}
\titlerunning{Uniform-Price Mechanism Design for AMMs}
\subtitle{Full Version}

\providecommand{\fullversionnote}{}
\author{Zhike Peng\thanks{Corresponding author. The author thanks Jing Chen
for her guidance and support, and Jichen Li, Xuanzhi Xia, \mbox{Wentao Zhou},
and the anonymous reviewers for helpful comments. This work is partially supported by the Beijing
Advanced Innovation Center for Future Blockchain and Privacy Computing.\fullversionnote}\orcidID{0009-0003-1974-2012}}
\authorrunning{Z. Peng}
\institute{Department of Computer Science and Technology,\\
Tsinghua University, 30 Shuangqing Road, Beijing 100084, China\\
\email{pzk22@mails.tsinghua.edu.cn}}

\maketitle

\begin{abstract}
Automated market makers (AMMs) are a central source of liquidity in
decentralized exchanges, but their reserve curves make mechanism design
substantially different from classical double auctions.

We study a batch double auction with buyers, sellers, and a
liquidity-nondecreasing AMM: post-trade reserves must have weakly larger
product than the initial reserves.
Each user has two-dimensional private information: a unit value and a
demand or supply quantity. The mechanism must clear user trades at one
uniform price while keeping the post-trade AMM state feasible. The objective is
user-side trading volume, benchmarked against the optimal feasible
individually rational~(IR) uniform-price allocation.

We give a randomized uniform-price mechanism based on binary quantity buckets.
It randomly chooses one of three clearing modes: peer-to-peer matching, AMM
sales to buyers, and seller sales to the AMM\@. Every deterministic realization
of the mechanism is ex-post feasible and satisfies IR and weak
quantity incentive compatibility: conditional on any quantity report,
truthful unit-value reporting is dominant whenever it gives a non-negative utility, no unit-value deviation gives a positive utility if the truthful one doesn't, and users cannot benefit by under-reporting the quantity together with their
truthful unit value. Let $V^\ast$ denote the optimal feasible IR uniform-price
volume without additional incentive constraints, and let
$\gamma=\bar q/\underline q$, where $\bar q$ and $\underline q$ are the upper
and lower bounds on users' quantity domain. For $\gamma\ge1$, the mechanism
achieves an expected trading volume of
\[
  \Omega(1/(1+\log\gamma)^2)\,V^\ast - O(\bar q).
\]

We also show that full incentive compatibility over the two-dimensional type space forces a worst-case
approximation ratio of $O(1/\gamma)$ at every prescribed initial reserve ratio,
even when both AMM reserves are scaled by an arbitrarily large common factor and
the optimal volume~$V^\ast$ is arbitrarily large relative to $\bar q$.

\keywords{Automated market makers \and double auctions \and uniform pricing
\and mechanism design \and approximation guarantees}
\end{abstract}

\newpage

\section{Introduction}

Public blockchains have made financial applications programmable and
continuously settled on a shared ledger. Decentralized exchanges are among the
most important examples, and automated market makers (AMMs) are their main source of on-chain liquidity:
constant-function automated market makers (CFAMMs) have been reported to handle daily trading volumes between
USD 3 billion and USD 6 billion~\cite{canidio2023batching}, and they are now standard
infrastructure in decentralized finance~\cite{mohan2022defi}. Instead of
maintaining an order book, an AMM replaces counterparty search with a
deterministic trading rule maintained by a liquidity pool. Early theory mostly
focused on CFAMMs, especially
constant-product automated market makers (CPAMMs), and characterizes the
geometry of these pools, including their prices, slippage, and arbitrage
behavior~\cite{angeris2020cfmm,angeris2021uniswap}.

The blockchain execution environment also shapes the relevant market design
problem. Transactions are ordered and executed block by block. On a
decentralized exchange, users broadcast transactions and validators or builders
select an execution ordering within each block; for an AMM, that ordering
determines the sequence of reserve states and hence the prices faced by
contemporaneous traders~\cite{ferreira2023vsr}. Sequential execution therefore
creates ordering externalities, arbitrage races, and sandwich
attacks~\cite{daian2020flashboys,qin2022quantifying}. Batch execution is a
natural design response. By collecting orders over a clearing interval and
executing them at one uniform price, a batch mechanism treats contemporaneous
orders symmetrically and reduces the value of manipulating transaction order.
This follows the same broad market-design logic as frequent batch auctions,
which replace continuous serial processing with discrete-time uniform-price
clearing~\cite{budish2015hft}.

Prior work on batch AMMs shows that batching can be combined with reserve-state
rules that differ from the usual constant-function execution rule. Canidio and
Fritsch~\cite{canidio2023batching} propose the function-maximizing AMM (FM-AMM),
in which batched trades execute at the post-batch marginal price and, in the
constant-product case, the pool may move to a higher product level. Ramseyer et
al.~\cite{ramseyer2023augmenting} study CFAMMs in batch exchanges under a
nondecreasing trading-function requirement. Motivated by this reserve-state
feature, we abstract away from the AMM response rule itself and impose only
that, after batch clearing, the constant-product AMM's reserve product is
weakly larger than before. This abstraction gives the batch mechanism
additional flexibility in clearing order imbalances while preserving a minimal
liquidity-nondecreasing requirement on the AMM\@. We call an AMM satisfying
this reserve feasibility constraint a liquidity-nondecreasing AMM (LND-AMM).

This leads to a different question from the existing batch-AMM literature.
Buyers and sellers submit both a unit value and a quantity. A mechanism
chooses user allocations and payments at one uniform price, while the AMM
absorbs the residual net imbalance induced by those allocations subject to the
LND feasibility constraint. Inspired by maximal-volume matching in the
double-auction literature~\cite{niu2013maximizing}, and by the importance of
executed order flow in AMM markets~\cite{canidio2023batching,adams2024amamm},
we measure efficiency by total user-side trading volume. Thus a peer-to-peer
trade of $q$ units contributes $2q$, while a user-AMM trade of $q$ units
contributes $q$. Accordingly, the objective counts executed user quantities, including
orders executed against AMM liquidity, without assigning the AMM an external
value for its ending inventory; we discuss its relation to welfare and gains
from trade below.
Uniform pricing alone does not remove the standard tension
between incentives and efficiency in double
auctions~\cite{myerson1983efficient,mcafee1992dominant}; here that tension is
compounded by the nonlinear reserve feasibility imposed by the AMM.

Two technical difficulties drive the study. First, each user has private
information in both a unit value and a quantity. This makes full
incentive compatibility~(IC) much more restrictive than in
single-parameter double auctions and forces us to distinguish value incentives
from quantity incentives. Second, the AMM constraint is global and nonlinear:
the feasibility of serving a set of users depends not only on who trades at the
uniform price, but also on the net imbalance absorbed by the AMM\@. The
benchmark is therefore not a local matching objective, but the best feasible
IR uniform-price batch allocation.

\paragraph{Contributions.}
First, we formulate a uniform-price double auction with an LND-AMM and
two-dimensional user types. The benchmark $V^\ast$ is the maximum user-side
volume among all feasible IR uniform-price allocations.

Second, we give a randomized uniform-price mechanism. It groups reported
quantities into binary buckets and randomly chooses one of three submechanisms:
$M_1$ clears peer-to-peer trades, $M_2$ lets the AMM sell to buyers, and $M_3$
lets sellers sell to the AMM\@. For a fixed random realization, clearing is
computed by sorting retained orders and scanning once. The most important and
technically difficult component is $M_1$, which is inspired by the
strongly-budget-balanced double auction of Segal-Halevi, Hassidim, and Aumann~\cite{segalhalevi2016sbba}. Their mechanism handles unit-demand and
unit-supply agents, whereas our bucket thresholds convert heterogeneous quantities into
unit-like blocks while preserving feasibility.

Third, we prove that every deterministic realization of the mechanism is
feasible and satisfies individual rationality (IR) and weak quantity incentive
compatibility (WQIC). Let
$\gamma=\bar q/\underline q$, where $\bar q$ and $\underline q$ are the upper
and lower bounds on users' quantity domain. For $\gamma\ge1$, the mechanism
achieves
\[
  \E[V_M] \ge
  \Omega(1/(1+\log\gamma)^2)V^\ast - O(\bar q).
\]
The quadratic logarithmic loss comes from the binary
quantity thresholds; the additive term comes from boundary orders, threshold
truncation, and trade reduction.

Finally, we show that some relaxation of full two-dimensional IC is
structurally necessary for a logarithmic guarantee in this framework. If a feasible, ex-post IR randomized uniform-price
mechanism is required to be fully IC over the complete two-dimensional type
space, even in the expected-utility sense rather than the ex-post sense, then for
every $\gamma>1$ and every market size $N$ there is an instance
with $N$ buyers and $N$ sellers, with $V^\ast=2N\bar q$, on which its expected
volume is at most $(3/\gamma)V^\ast$. This remains true at every prescribed
initial reserve ratio and under an arbitrarily large common scaling of the two
initial reserves. Thus the worst-case
approximation ratio achievable under full IC is $O(1/\gamma)$, and the
approximation-ratio separation is not an artifact of the additive $O(\bar q)$
term in the positive guarantee or of a nearly empty pool.

\section{Related Work}


The idea of automated market making has several important predecessors. In
prediction markets, Hanson~\cite{hanson2007lmsr} introduced the logarithmic
market scoring rule, which provides continuous liquidity through an automated
pricing rule. In the blockchain setting, Buterin~\cite{buterin2017path} discussed
path-independent on-chain market makers and the constant-product invariant,
and Adams~\cite{adams2018uniswap} crystallized this design in the early
Uniswap white paper. These developments led to a rich and rapidly growing
literature on AMMs, whose early theoretical focus was largely on CFAMMs,
especially CPAMMs. Angeris and Chitra~\cite{angeris2020cfmm} formulate
CFAMMs as invariant-based trading mechanisms and study their pricing and
oracle properties. Angeris et al.~\cite{angeris2021uniswap} analyze Uniswap-style
CPAMMs, clarifying how reserves, slippage, and arbitrage determine trading
outcomes. Bartoletti, Chiang, and Lluch-Lafuente~\cite{bartoletti2021theory}
give a broader formal theory of AMMs in DeFi, emphasizing the structure shared
across common invariant-based designs. Mohan~\cite{mohan2022defi} surveys the
broader decentralized-exchange ecosystem.

Later work shifts attention from static CFAMM or CPAMM geometry to market
microstructure and mechanism issues around AMMs. Milionis et
al.~\cite{milionis2022lvr} identify loss-versus-rebalancing as a central source
of liquidity-provider losses. Milionis, Moallemi, and
Roughgarden~\cite{milionis2024fees} study how fees reshape arbitrage profits
and AMM performance. Adams et al.~\cite{adams2024amamm} propose an auction-managed AMM
that auctions the right to set swap fees and collect fee revenue. In parallel, Daian
et al.~\cite{daian2020flashboys} and Qin et al.~\cite{qin2022quantifying} show
how transaction ordering on blockchains creates sandwich attacks, arbitrage
races, and broader extractable-value concerns. These works explain why batch
execution and auction design matter; and our study asks what can still be
guaranteed once one also imposes uniform pricing and strategic reporting by
both buyers and sellers.

The LND-AMM reserve constraint is closest to batch AMMs that let the pool move
to a weakly higher product level rather than staying on the old constant-product
curve. Canidio and Fritsch~\cite{canidio2023batching} introduce the FM-AMM,
motivated by the observation that batching can eliminate arbitrage profits,
loss-versus-rebalancing, and sandwich attacks when arbitrageurs compete inside
the batch. Their AMM executes all trades in a batch at the post-batch marginal
price and, for fixed prices, is equivalent to maximizing an AMM objective
function. In the constant-product case, this means moving to the highest
attainable product curve. Ramseyer et
al.~\cite{ramseyer2023augmenting} study the more general problem of
incorporating CFAMMs into batch exchanges. Their axiomatic framework includes
uniform prices, asset conservation, best-response execution of limit orders,
and the requirement that each CFAMM's trading function not decrease; they then
analyze equilibrium computation, price coherence, and tradeoffs among these
desiderata.

Our model takes only the nondecreasing reserve-function implication from this
line of work. Unlike Canidio and Fritsch~\cite{canidio2023batching}, we do not
prescribe how the AMM chooses its post-batch reserve point at a given price.
Unlike Ramseyer et al.~\cite{ramseyer2023augmenting}, we do not study equilibrium
computation for multiple CFAMMs and limit orders.
The AMM reserve condition enters here only as a feasibility constraint in a
strategic uniform-price double auction. The central question is how much
user-side volume can be guaranteed when buyers and sellers have two-dimensional
private information.

Other work explicitly treats AMMs as mechanisms. Ferreira and
Parkes~\cite{ferreira2023vsr} analyze decentralized exchange design under
verifiable sequencing rules. Chan, Wu, and Shi~\cite{chan2025amm} construct a
constant-potential batch swap mechanism that is arbitrage resilient in the
plain sequencing model and strategy-proof under weak fair sequencing. Their
goal is MEV resilience against strategies that may submit multiple orders;
they neither require batch-wide uniform pricing nor approximate a double-auction
objective, and leave welfare optimization open. Li, Shi, and
Zhang~\cite{li2026trilemma} work in that framework and prove a qualitative
trilemma: among IC, weak local efficiency~(wLE), and uniform pricing, any two
can be achieved but all three cannot. Here wLE is a local fulfillment condition
only for orders whose limits are no more stringent than the AMM's initial
marginal price.

Their trilemma and our full-IC upper bound do not imply one another. The
trilemma rules out the exact conjunction IC$+$wLE$+$uniform pricing but gives
no approximation rate to a global objective. Our theorem instead places a
quantitative $O(1/\gamma)$ upper bound on the worst-case volume guarantee of
any fully IC uniform-price mechanism, without assuming wLE; conversely, a poor
global volume ratio need not witness a wLE violation. Our model also permits
the reserve product to increase and covers randomized mechanisms under
expected-utility IC, whereas their stated trilemma uses deterministic
constant-potential mechanisms. Thus the two results identify complementary,
rather than logically nested, incentive-efficiency tensions.

Because both of our main results are stated relative to a global volume
benchmark, we next situate that objective in the AMM and double-auction
literatures. In AMM work, trading volume has often been used as an empirical
measure of market activity or as an input to fee-revenue calculations, rather
than as the formal mechanism-design objective. For example, Canidio and
Fritsch~\cite{canidio2023batching} use the scale of AMM trading volume to
motivate the market-design problem, while Adams et
al.~\cite{adams2024amamm} optimize revenue from uninformed order flow. In the
double-auction literature, Niu and Parsons~\cite{niu2013maximizing} explicitly
study maximal-volume matching---maximizing the amount of commodity traded---and
motivate it by transaction throughput and the interests of marketplaces whose
revenue grows with trading activity. Our benchmark adopts this order-execution
interpretation, while adding trades against AMM liquidity and strategic
quantity reports.

Classical double-auction theory provides the relevant incentive-efficiency
context. Myerson and Satterthwaite~\cite{myerson1983efficient} show that,
in the standard bilateral-trade model with independent private values and
overlapping supports, IR, IC, budget balance, and
ex-post allocative efficiency cannot all be satisfied; in the bilateral setting,
this exact allocative efficiency is equivalent to Pareto efficiency. Our full-IC upper
bound differs in both the market and the efficiency notion: the market contains
many buyers and sellers plus an AMM reserve constraint, and the target is not
exact surplus efficiency but approximation to the global feasible
uniform-price volume $V^\ast$. McAfee's trade-reduction
mechanism~\cite{mcafee1992dominant} is a central truthful double-auction benchmark. The
peer-to-peer part of our mechanism is more directly inspired by the
strongly-budget-balanced double auction of Segal-Halevi, Hassidim, and
Aumann~\cite{segalhalevi2016sbba}, whose unit-demand and unit-supply structure we
adapt through quantity buckets.

\section{The model}

The rest of the paper uses the following batch-clearing model. We first specify
the AMM reserve constraint, then the users' reports and utilities, and finally
the mechanism properties and benchmark.

There are two assets: a traded asset $X$ and a numeraire asset $Y$. The AMM
starts from positive reserves $P_0=(x_0,y_0)$ and a post-trade state is denoted
$P'=(x,y)$. Reserve states have non-negative coordinates, and we require
\[
  xy \ge x_0y_0.
\]
We call this a liquidity-nondecreasing AMM (LND-AMM). It weakly generalizes the
usual constant-product constraint by allowing the batch to end with strictly more
product-liquidity.

For $q<x_0$, define
\[
  C(q)=\frac{x_0y_0}{x_0-q}-y_0.
\]
If the AMM sells $q>0$ units of $X$, then $C(q)$ is the minimum required inflow
of $Y$. For an AMM purchase of $q>0$ units of $X$, the maximum outflow of $Y$
is $-C(-q)$. The average sell price $C(q)/q=y_0/(x_0-q)$ is increasing in $q$,
while the average buy price $-C(-q)/q=y_0/(x_0+q)$ is decreasing in $q$.

\paragraph{Users.}
Let $\mathcal{B}$ and $\mathcal{S}$ denote the buyer and seller sets,
respectively. Throughout, $i\in\mathcal{B}$ indexes buyers,
$j\in\mathcal{S}$ indexes sellers, and $k$ ranges over
$\mathcal{B}\cup\mathcal{S}$ to index a generic user.

Buyer $i$ reports $(\hat b_i,\hat q_i)$, where $\hat b_i$ is a bid
and $\hat q_i$ is a demand quantity. Seller~$j$ reports
$(\hat s_j,\hat q_j)$, where $\hat s_j$ is an ask and $\hat q_j$ is a
supply quantity. True and reported unit values and unit costs are non-negative.
All true and reported quantities lie in
$[\underline q,\bar q]$, where $0<\underline q\le\bar q<\infty$, and
\[
  \gamma=\bar q/\underline q.
\]
Ties in sorting and allocation are resolved by a fixed report-independent
priority rule. Prices, utilities, and reserve feasibility use the unperturbed
numerical values.

Buyer $i$ has true type $(b_i,q_i)$, where $b_i$ is the unit value and $q_i$ is
the true demand quantity. If the buyer receives $z_i$ units and pays
$t_i$, her utility is
\[
  u_i=b_i\min\{z_i,q_i\}-t_i.
\]
Seller $j$ has two-dimensional private type $(s_j,q_j)$, where $s_j$ is the
unit cost and $q_j$ is the true supply quantity. Each seller also has a
public, exogenous over-supply penalty factor $\lambda_j\ge 1$. If the seller
sells $z_j$ units and receives
$t_j$, her utility is
\[
  u_j=
  t_j - s_jz_j
  -(\lambda_j-1)s_j\max\{z_j-q_j,0\}.
\]
The public factors $\lambda_j$ may differ across sellers; larger values model
the extra cost of sourcing units beyond owned supply. This term has no buyer
counterpart because units beyond a buyer's demand can simply have zero marginal
value, as captured by $\min\{z_i,q_i\}$. In contrast, a seller allocated
above owned supply must procure or borrow the shortfall. For a unified
presentation, we take the factors to be public, so buyers and sellers both have
two-dimensional private types.

\paragraph{Deterministic mechanisms and feasibility.}
These preferences are evaluated under uniform-price allocations that may leave
a residual imbalance for the AMM\@.
A deterministic mechanism maps the reported orders and the public AMM state to
allocations $z_i$ for buyers and $z_j$ for sellers, and payments $t_i,t_j$.
It is uniform-price if it also outputs a price $\pi$ such that
$t_i=\pi z_i$ for buyers and $t_j=\pi z_j$ for sellers. Write
$Q_B=\sum_{i\in\mathcal{B}} z_i$ and $Q_S=\sum_{j\in\mathcal{S}} z_j$. The AMM absorbs the net
imbalance, so the induced state is
\[
  P'=
  \bigl(x_0-Q_B+Q_S,\,
  y_0+\pi(Q_B-Q_S)\bigr).
\]
The deterministic outcome is feasible if $0\le z_i\le \hat q_i$ for every
buyer, $0\le z_j\le \hat q_j$ for every seller, and $P'$ has non-negative
coordinates satisfying $xy\ge x_0y_0$.

To compare reports, fix $M$, a user $k$ with true type $(v,q)$, all other
users' reports, and the public parameters (including reserves and penalties).
Here $v$ is the buyer's unit value $b_i$ or the seller's unit cost $s_j$.
Write $u_k(\hat v,\hat q\mid v,q)$ for the buyer or seller utility defined
above, evaluated at the allocation and payment that $M$ produces when
$k$ reports $(\hat v,\hat q)$. The true type $(v,q)$ stays fixed; the
mechanism, other reports, and public parameters are suppressed in this notation.

IR means $u_k(v,q\mid v,q)\ge0$ for every user, true type, other-report
profile, and public parameter choice. Full IC requires
$u_k(v,q\mid v,q)\ge u_k(\hat v,\hat q\mid v,q)$ for every admissible
report $(\hat v,\hat q)$.

\paragraph{Weak quantity incentive compatibility (WQIC).}

\begin{definition}[WQIC]\label{def:wqic}
A deterministic mechanism $M$ is weak quantity incentive compatible if both
conditions hold for every
user, true type, other-report profile, and public parameter choice:
\begin{enumerate}
\item For every admissible report $(\hat v,\hat q)$,
\[
 u_k(\hat v,\hat q\mid v,q)\le\max\{u_k(v,\hat q\mid v,q),0\}.
\]
\item For every $\hat q\in[\underline q,q]$,
\[
 u_k(v,\hat q\mid v,q)\le u_k(v,q\mid v,q).
\]
\end{enumerate}
\end{definition}

Thus, once the quantity report is fixed, truthful unit-value reporting weakly maximizes
the user's utility among all unit-value reports whenever the truthful unit-value report
yields non-negative utility at that quantity. If the truthful unit-value report yields
non-positive utility at that quantity, then no unit-value report with the same quantity
can yield positive utility. In addition, a user cannot gain by understating demand or supply at the truthful unit value.

The distinction between the non-negative and non-positive cases in the first condition
is needed because a large quantity overstatement can make the truthful unit value undesirable.
Reporting a quantity far above the true quantity may cause the allocation to exceed the user's
true demand or supply, in which case the user may prefer to report a less competitive unit
value in order to avoid trade. The first condition therefore requires truthful unit-value
reporting to be optimal only when it yields non-negative utility, and otherwise requires
that no unit-value report at the same quantity can yield positive utility.

For a deterministic IR, WQIC mechanism, the definition also gives a useful reduction
in the set of reports that need to be considered. A deviation
$(\hat v,\hat q)$ giving non-positive utility is weakly dominated by
truthful reporting $(v,q)$, which gives non-negative utility by IR.
If instead $(\hat v,\hat q)$ gives positive utility, then
$(v,\hat q)$ must also give positive utility: otherwise the first WQIC
condition would rule out every positive-utility value report at
$\hat q$. Applying that condition again shows that $(v,\hat q)$
weakly dominates $(\hat v,\hat q)$. If $\hat q\le q$, the second
condition further gives dominance by $(v,q)$. Consequently, only
quantity overstatements at the truthful unit value remain as potentially
profitable deviations.

These conditions reflect the user-side-volume objective: quantity
withholding reduces the amount offered for execution, whereas larger
quantity reports leave room for additional trades.

\paragraph{Randomized mechanisms and ex-post properties.}
A randomized mechanism is a family of deterministic mechanisms
$\{M^\omega\}_{\omega\in\Omega}$ with a report-independent seed distribution.
It satisfies ex-post feasibility, IR, or full IC if every $M^\omega$
satisfies the corresponding deterministic property. Full IC in expected
utility requires truthful reporting to maximize expected utility over both
report dimensions. Ex-post full IC implies full IC in expected utility by
averaging the incentive inequalities.

For a randomized mechanism, ex-post WQIC means that the deterministic
definition holds for every $M^\omega$, using the same seed across the
reports being compared. The preceding reduction applies
separately to each realization. In particular, ex-post WQIC and ex-post IR
imply that, at the truthful quantity, truthful unit-value reporting weakly
maximizes realized utility among all unit-value reports with that quantity.

\paragraph{Benchmark.}
We separate the approximation target from these incentive constraints.
For an allocation, define user-side volume
\[
  V=\sum_{i\in\mathcal{B}} z_i+\sum_{j\in\mathcal{S}} z_j.
\]
Let $V^\ast$ be the maximum volume among all feasible IR uniform-price
allocations for the reported instance. This benchmark imposes no incentive
constraints beyond IR: it is the best volume that could be obtained by a
feasible IR uniform-price allocation if the reported types were taken as fixed
inputs and strategic misreporting were ignored. Equivalently, after all user
reports are fixed, we compare our mechanism and the benchmark on that same
reported instance. Thus $V^\ast$ is an upper bound on what one can hope to
achieve under additional incentive requirements such as WQIC or full IC\@. Partially served users are permitted in this benchmark.

Volume is deliberate rather than a proxy for welfare. It measures how much
submitted user demand and supply the batch executes, including orders cleared
against the AMM. A peer-to-peer trade of $q$ units executes two user orders and
therefore contributes $2q$, whereas a user-AMM trade of $q$ units executes one
user order and contributes $q$. For user-to-user trades, the gains-from-trade
objective instead weights the traded quantity by the buyer--seller surplus margin, so
neither objective generally approximates the other. Extending gains from trade
or social welfare to user-AMM trades would additionally require an external
valuation of changes in the pool's reserve inventory, which wasn't considered
in the AMM mechanism-design literature closest to our setting. Studying gains
from trade for AMMs is an interesting direction for future work. Our
results should therefore be read as user-side order-execution guarantees, not
as welfare guarantees. This choice is consistent with maximal-volume matching
in double auctions and with the operational role of executed order flow in AMM
markets. We do not claim that volume is universally preferable to gains from
trade or social welfare.

\section{Our mechanism}

With the model and benchmark established, we now introduce our randomized
mechanism $M$.

Let
\[
  L=\max\left\{1,\left\lceil \log_2(\bar q/\underline q)\right\rceil\right\}.
\]
For buyers, define buckets $I_0,\ldots,I_{L-1}$ by
\[
  I_l=\{i\in\mathcal{B}:\hat q_i\in[2^l\underline q,2^{l+1}\underline q)\},
  \quad l<L-1,
\]
and
\[
  I_{L-1}=\{i\in\mathcal{B}:\hat q_i\in[2^{L-1}\underline q,\bar q]\}.
\]
Seller buckets $J_l$ are defined analogously. The full mechanism $M$ samples
$\mu$ uniformly from $\{1,2,3\}$ and runs $M_\mu$. Thus, in this mechanism, the
random realization $\omega$ consists only of the selected mode $\mu$ and the
selected bucket threshold or thresholds used by that mode.

The three modes correspond to the three ways in which the optimal uniform-price
allocation can create volume: $M_1$ handles peer-to-peer user matching, $M_2$
handles the case in which the AMM sells net quantity to buyers, and $M_3$
handles the case in which sellers sell net quantity to the AMM.

The central and most technically difficult case is $M_1$. It is inspired by the
strongly-budget-balanced double auction of Segal-Halevi, Hassidim, and Aumann~\cite{segalhalevi2016sbba}. Their mechanism is designed for unit-demand buyers
and unit-supply sellers, where the allocation problem is essentially
single-parameter and IC can be handled by standard critical-price arguments.
Here, by contrast, users have heterogeneous quantities and two-dimensional
types. The bucket thresholds in $M_1$ convert reported quantities into
unit-like blocks, allowing a related trade-reduction idea to be applied while
preserving feasibility and the WQIC guarantee.
Appendix~\ref{app:m1-example} gives a numerical example of its clearing index,
boundary ask, and both clearing branches.

\paragraph{$M_1$: Peer-to-peer clearing.}
Sample buyer and seller thresholds $l_B,l_S$ independently and uniformly from
$\{0,\ldots,L-1\}$. Let
\[
  \Delta_B=2^{l_B}\underline q,
  \qquad
  \Delta_S=2^{l_S}\underline q.
\]
Retain buyers in $\cup_{l\ge l_B}I_l$ and sort them by decreasing bid
$\hat b_{(1)}\ge\cdots\ge\hat b_{(n)}$. Retain sellers in
$\cup_{l\ge l_S}J_l$ and sort them by increasing ask
$\hat s_{(1)}\le\cdots\le\hat s_{(m)}$.
Assume $n,m>0$; otherwise this case is trivial.

Suppose first that $\Delta_B\ge \Delta_S$ and set
$r=\Delta_B/\Delta_S$. One buyer and $r$ sellers form one candidate trade set.
The effective seller ask of the $h$th trade set is
\[
  a_h=\hat s_{(r(h-1)+1)}.
\]
Let
\[
  \kappa=
  \max\bigl(\{0\}\cup
  \{h:1\le h\le n,\ r(h-1)+1\le m,\ a_h\le \hat b_{(h)}\}\bigr).
\]
If $\kappa=0$, no trade occurs. Otherwise define
\[
  \alpha_\kappa=
  \begin{cases}
    \hat s_{(r\kappa+1)}, & r\kappa+1\le m,\\
    +\infty, & r\kappa+1>m.
  \end{cases}
\]
If $\alpha_\kappa\le \hat b_{(\kappa)}$, set the price
$p^\ast=\alpha_\kappa$ and trade the first $\kappa$ buyers and first
$r\kappa$ sellers. Each winning buyer receives $\Delta_B$ and each winning
seller sells $\Delta_S$.

If $\alpha_\kappa>\hat b_{(\kappa)}$, perform trade reduction. Set
$p^\ast=\hat b_{(\kappa)}$, trade only the first $\kappa-1$ buyers, and let
\[
  T_S=\{j\in \cup_{l\ge l_S}J_l:\hat s_j\le \hat b_{(\kappa)}\}.
\]
The sellers in $T_S$ split the required quantity equally, so each sells
$(\kappa-1)\Delta_B/|T_S|$. The case $\kappa=1$ gives zero trade.

Suppose next that $\Delta_B<\Delta_S$ and set $r=\Delta_S/\Delta_B$. One
seller and $r$ buyers form one candidate trade set. The effective buyer
bid of the
$h$th trade set is
\[
  c_h=\hat b_{(r(h-1)+1)}.
\]
Let
\[
  \kappa=
  \max\bigl(\{0\}\cup
  \{h:1\le h\le m,\ r(h-1)+1\le n,\ \hat s_{(h)}\le c_h\}\bigr).
\]
If $\kappa=0$, no trade occurs. Otherwise define
\[
  \beta_\kappa=
  \begin{cases}
    \hat b_{(r\kappa+1)}, & r\kappa+1\le n,\\
    -\infty, & r\kappa+1>n.
  \end{cases}
\]
If $\beta_\kappa\ge \hat s_{(\kappa)}$, set the price
$p^\ast=\beta_\kappa$ and trade the first $r\kappa$ buyers and first $\kappa$
sellers. Each winning buyer receives $\Delta_B$ and each winning seller sells
$\Delta_S$.

If $\beta_\kappa<\hat s_{(\kappa)}$, perform trade reduction. Set
$p^\ast=\hat s_{(\kappa)}$, trade only the first $\kappa-1$ sellers, and let
\[
  T_B=\{i\in \cup_{l\ge l_B}I_l:\hat b_i\ge \hat s_{(\kappa)}\}.
\]
The buyers in $T_B$ split the required quantity equally, so each buys
$(\kappa-1)\Delta_S/|T_B|$. The case $\kappa=1$ gives zero trade.

\paragraph{$M_2$: AMM-to-buyer clearing.}
The remaining two modes use the same threshold idea but isolate one-sided
trades against the AMM\@.
Sample $l_B$ uniformly, set $\Delta=2^{l_B}\underline q$, retain buyers in
$\cup_{l\ge l_B}I_l$, and sort them by decreasing bid. If no buyer is retained,
no trade occurs. Otherwise, choose
\[
  k^\ast=\max\{k\in\{0,\ldots,n-1\}:
  k\,\hat b_{(k+1)}\Delta\ge C(k\Delta),\ k\Delta<x_0\}.
\]
If $k^\ast=0$, no trade occurs. Otherwise, the first $k^\ast$ buyers each buy
$\Delta$ units from the AMM at uniform price $p^\ast=\hat b_{(k^\ast+1)}$.

\paragraph{$M_3$: Seller-to-AMM clearing.}
Sample $l_S$ uniformly, set $\Delta=2^{l_S}\underline q$, retain sellers in
$\cup_{l\ge l_S}J_l$, and sort them by increasing ask. If no seller is retained,
no trade occurs. Otherwise, choose
\[
  k^\ast=\max\{k\in\{0,\ldots,m-1\}:
  k\,\hat s_{(k+1)}\Delta\le -C(-k\Delta)\}.
\]
If $k^\ast=0$, no trade occurs. Otherwise, the first $k^\ast$ sellers each
sell $\Delta$ units to the AMM at uniform price $p^\ast=\hat s_{(k^\ast+1)}$.

\section{Analysis}

Having described the rule, we turn to its incentive and volume guarantees.
We first verify that every realization of the mechanism is a feasible
uniform-price allocation and that truthful users satisfy ex-post IR\@. We then
state the ex-post WQIC and volume-approximation guarantees, giving the main
intuition for each; their full proofs appear in the appendix. The mechanism
neither elicits nor uses the public penalty factors, and the guarantees hold
for every vector $\boldsymbol{\lambda}$ with $\lambda_j\ge1$.

\begin{theorem}[Feasibility and ex-post IR]
\label{thm:feas-ir}
The mechanism $M$ is feasible and uniform-price in every realization. It
satisfies ex-post IR under truthful reporting.
\end{theorem}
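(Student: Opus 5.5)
The plan is to check the three modes separately, since each realization runs exactly one deterministic $M_\mu$. For each mode I verify four things in order: every payment has the form $\pi z_k$; the quantity bounds $0\le z_k\le\hat q_k$ hold; the induced AMM state is non-negative and satisfies $xy\ge x_0y_0$; and truthful users get non-negative utility. Uniform pricing holds by construction, since every mode charges or pays the single price $p^\ast$ per unit. The quantity bounds follow from the bucket retention rule. A retained buyer has $\hat q_i\ge 2^{l_B}\underline q=\Delta_B$, and similarly $\hat q_j\ge\Delta_S$ for sellers. This covers the non-reduction branches of $M_1$ and modes $M_2,M_3$.

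\textbf{Mode $M_1$ and the reduction branch.} In $M_1$ the total quantity bought equals the total sold: $\kappa\Delta_B=r\kappa\Delta_S$ in the first case, and $(\kappa-1)\Delta_B$ split across $T_S$ in the reduction case. So $Q_B=Q_S$ and $P'=P_0$, which is trivially feasible.

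The reduction branch needs care on the per-seller bound. I must show each seller in $T_S$ sells $(\kappa-1)\Delta_B/|T_S|\le\Delta_S\le\hat q_j$, i.e.\ $|T_S|\ge r(\kappa-1)$. This holds because $a_\kappa=\hat s_{(r(\kappa-1)+1)}\le\hat b_{(\kappa)}$, so at least $r(\kappa-1)+1$ retained sellers have ask at most $\hat b_{(\kappa)}$. The symmetric case uses $\hat s_{(\kappa)}\le c_\kappa$ to get $|T_B|\ge r(\kappa-1)+1$.

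IR in $M_1$ uses the price position. In the non-reduction branch $p^\ast=\alpha_\kappa$. Every winning buyer has $\hat b_{(h)}\ge\hat b_{(\kappa)}\ge\alpha_\kappa$. Every winning seller has index $\le r\kappa$, so its ask is $\le\hat s_{(r\kappa+1)}=\alpha_\kappa$. In the reduction branch buyers $1,\dots,\kappa-1$ have bids $\ge\hat b_{(\kappa)}=p^\ast$, and sellers in $T_S$ have asks $\le p^\ast$ by definition. Truthful users receive at most their true quantity, so the $\min$ and the over-supply penalty are inactive. Utility is therefore $(b_i-p^\ast)z_i\ge0$ or $(p^\ast-s_j)z_j\ge0$, and users with zero allocation get zero.

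\textbf{Modes $M_2$ and $M_3$.} In $M_2$ the AMM sells $k^\ast\Delta<x_0$ units and receives $k^\ast\Delta\,\hat b_{(k^\ast+1)}\ge C(k^\ast\Delta)$. Hence $x=x_0-k^\ast\Delta>0$ and $y\ge y_0+C(k^\ast\Delta)=x_0y_0/(x_0-k^\ast\Delta)$, which gives $xy\ge x_0y_0$. In $M_3$ the AMM buys $k^\ast\Delta$ and pays $k^\ast\Delta\,\hat s_{(k^\ast+1)}\le -C(-k^\ast\Delta)$. This yields $y\ge x_0y_0/(x_0+k^\ast\Delta)>0$ and again $xy\ge x_0y_0$. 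IR follows because winners are the top $k^\ast$ by bid (resp.\ ask) and the price is the $(k^\ast+1)$-th bid (ask).

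\textbf{Main obstacle.} I expect the main obstacle to be the reduction branch of $M_1$: confirming the equal-split quantity does not exceed $\Delta_S$, and handling the $\kappa=1$ degenerate case and ties under the fixed priority rule. The counting argument above is what I would write out carefully.
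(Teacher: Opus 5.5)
Your proposal is correct and follows essentially the same route as the paper: zero net AMM exchange in $M_1$ with the counting bound $|T_S|\ge r(\kappa-1)$ in the reduction branch, the defining inequalities $k^\ast\Delta\,p^\ast\ge C(k^\ast\Delta)$ and $k^\ast\Delta\,p^\ast\le -C(-k^\ast\Delta)$ for $M_2$ and $M_3$, and price-versus-value comparisons (with allocations capped by true quantities) for IR. You are slightly more explicit than the paper: you derive $|T_S|\ge r(\kappa-1)+1$ from $a_\kappa\le\hat b_{(\kappa)}$, treat the symmetric $T_B$ case, and check that the post-trade reserves are non-negative.
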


\begin{proof}
Uniform pricing is immediate from construction. We first check feasibility.
For $M_1$, assume $\Delta_B\ge\Delta_S$; the other case is symmetric. In the
no-reduction branch, $\kappa$ buyers each receive $\Delta_B$ and $r\kappa$
sellers each sell $\Delta_S$, where $r=\Delta_B/\Delta_S$. The two sides
therefore trade the same total quantity, so the AMM state is unchanged. In the
trade-reduction branch, $\kappa-1$ buyers each receive $\Delta_B$ and sellers
in $T_S$ jointly supply $(\kappa-1)\Delta_B$, so the AMM again has zero net
exchange.

Each retained buyer has reported quantity at least $\Delta_B$. In the
no-reduction branch each retained seller has reported quantity at least
$\Delta_S$. In the reduction branch, $|T_S|\ge r(\kappa-1)$, so each selected
seller supplies at most
\[
  \frac{(\kappa-1)\Delta_B}{|T_S|}
  \le \Delta_S.
\]
Thus user quantity feasibility holds in $M_1$.

For $M_2$, if $k^\ast=0$, then no trade occurs and feasibility is trivial.
Otherwise, let $Q=k^\ast\Delta$ and $p^\ast=\hat b_{(k^\ast+1)}$. By definition,
$Qp^\ast\ge C(Q)$ and $Q<x_0$. Therefore
\[
  (x_0-Q)(y_0+Qp^\ast)
  \ge (x_0-Q)(y_0+C(Q))
  =x_0y_0.
\]
For $M_3$, if $k^\ast=0$, then no trade occurs and feasibility is trivial.
Otherwise, let $Q=k^\ast\Delta$ and $p^\ast=\hat s_{(k^\ast+1)}$. The definition
gives $Qp^\ast\le -C(-Q)$, and therefore
\[
  (x_0+Q)(y_0-Qp^\ast)
  \ge (x_0+Q)(y_0+C(-Q))
  =x_0y_0.
\]

It remains to check IR\@. In every submechanism, a truthful winning buyer trades
at a price no larger than her unit value, and a truthful winning seller trades
at a price no smaller than her unit value. Feasibility also ensures that a
truthful winner is never allocated more than her true quantity, so the
over-supply penalty is zero for truthful sellers. Each truthful realized
utility is therefore non-negative.
\hfill\qed
\end{proof}

\begin{theorem}[Ex-post WQIC]
\label{thm:wqic}
The mechanism $M$ satisfies ex-post WQIC.
\end{theorem}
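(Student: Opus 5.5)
Because ex-post WQIC is defined realization by realization with a common seed, I will fix the mode $\mu$ and the sampled thresholds and verify both conditions of Definition~\ref{def:wqic} for each deterministic submechanism separately. The key structural fact is that, within a realization, the reported quantity $\hat q_k$ enters only through bucket membership: it decides whether $k$ is retained (that is, whether $\hat q_k\ge\Delta_B$, $\Delta_S$, or $\Delta$), and it never affects the allocation size of a retained winner. That size is always $\Delta_B$, $\Delta_S$, $\Delta$, or an equal split that is at most $\Delta_S$ (resp.\ $\Delta_B$), as shown in the proof of Theorem~\ref{thm:feas-ir}.

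Condition~2 (understating quantity at the truthful value) then follows quickly. If $\hat q<q$ and $k$ is not retained under $\hat q$, her utility is $0\le u_k(v,q\mid v,q)$ by Theorem~\ref{thm:feas-ir}. If she is retained under $\hat q$, she is also retained under $q$, and the sorted list, the effective asks/bids, $\kappa$, $\alpha_\kappa$, $\beta_\kappa$, $T_S$, $T_B$ and $k^\ast$ depend only on the retained unit-value reports. Hence the outcome is identical under $\hat q$ and $q$, and the utilities coincide.

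Condition~1 (value deviations with $\hat q$ fixed) is where the work lies. Fixing $\hat q$ fixes retention, so $k$ is either excluded for every value report (utility $0$ throughout, and the condition is trivial) or retained for every one. In the retained case, I will show that each submechanism restricted to value reports is a single-parameter mechanism with the following properties: a fixed allocation amount $A$ is given whenever $k$ wins (possibly with a reduced split amount), the price does not depend on $k$'s own report when she wins, and $k$ wins exactly when her report beats a critical value $\theta$ that equals the price paid by winners.

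Given such a threshold structure, the realized utility from a value report is either $0$ (a loss) or $\phi(A)=vA-\theta A$ minus, for sellers, the term $(\lambda_k-1)v\max\{A-q,0\}$. For buyers the utility is $v\min\{A,q\}-\theta A$. The truthful report wins exactly when $v$ beats $\theta$. So if truthful utility is non-negative, any deviation yields either the same win outcome or $0$, and neither improves. If truthful utility is negative, every deviation yields $0$ or the same negative value. Both cases give $u\le\max\{u(v,\hat q),0\}$. Because over-demand makes the winning utility possibly negative even when $v$ beats $\theta$, the $\max$ with $0$ in the definition is exactly what absorbs this.

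The main obstacle is establishing the threshold/price-independence structure for $M_1$, especially with trade reduction and the equal-split branch. For $M_2$ and $M_3$ this is standard: the price is the $(k^\ast+1)$th bid (ask), and a winner among the top $k^\ast$ cannot move $k^\ast$ or the price without losing. One must check that the monotone condition $k\hat b_{(k+1)}\Delta\ge C(k\Delta)$ gives the usual critical-bid argument.

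For $M_1$, I will follow the SBBA argument of Segal-Halevi et al., with trade sets playing the role of unit agents. I will argue case by case, taking $\Delta_B\ge\Delta_S$ (the other case is symmetric). A winning buyer's report cannot change $\alpha_\kappa$ or $\hat b_{(\kappa)}$ while she still wins. In the reduction branch the $\kappa$th buyer is priced out at her own bid, so she is the critical loser and gets zero at any report that keeps that role. A seller in $T_S$ receives a split share $(\kappa-1)\Delta_B/|T_S|$ at price $\hat b_{(\kappa)}$, which is independent of her ask as long as $\hat s_j\le\hat b_{(\kappa)}$. Her share may change with $|T_S|$, but only among non-negative-margin outcomes at the same price, and I will verify that raising or lowering her ask past $\hat b_{(\kappa)}$ only switches her to $0$ or to a branch with a weakly worse price.

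The delicate subcase is a seller whose ask deviation shifts $\kappa$ or toggles between the two branches. Here I would enumerate the positions of the deviator relative to the index $r\kappa+1$ and use the fact that asks enter $\kappa$ only through the effective asks $a_h$ at indices $r(h-1)+1$.
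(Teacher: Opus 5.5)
Your overall structure matches the paper's proof. Fixing the realization, handling understatement through retention (an identical outcome when the bucket decision is unchanged, and $0\le$ truthful utility by IR when the user drops out), and using critical prices for buyers in $M_2$, sellers in $M_3$, and the fixed-block side of $M_1$ are all what the paper does. The paper makes the $M_1$ buyer's critical bid explicit as $\tau_i=\min\{\hat s_{(rK_i+1)},\hat b^{-i}_{(K_i-1)}\}$, with $K_i$ computed from the other bids. One small error: for sellers the winning payoff is $\theta A-vA-(\lambda_k-1)v\max\{A-q,0\}$; you have the sign of the margin flipped.

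\textbf{The gap.} It lies in the case you yourself call the main obstacle: the seller side of $M_1$ (for $\Delta_B\ge\Delta_S$), where the reduction branch gives the fractional share $(\kappa-1)\Delta_B/|T_S|$. You only announce this case, and the property you plan to verify is not the one needed.

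\emph{Why the planned criteria fall short.} You allow that her share ``may change with $|T_S|$'' at the same price. If that could happen through her own ask, your fixed-$A$ threshold argument would fail, since a seller with positive margin would want the larger share. Likewise, ``switches her to $0$ or to a branch with a weakly worse price'' does not bound utility: a new winning outcome could carry a larger quantity at a price still above her cost.

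\emph{What must be shown.} Every ask deviation yields the same outcome, zero, or a price at most her true cost $s$. The paper gets this from two boundary facts.
\begin{itemize}
\item[(B1)] If a lower ask $a<s$ raises the clearing index to $h$, she must be newly counted at the buyer boundary $\hat b_{(h)}$. Hence $a\le\hat b_{(h)}<s$, and the new price is at most $\hat b_{(h)}$.
\item[(B2)] If a higher ask $a>s$ lowers the index to $h$, then exactly $rh$ other sellers have asks at most $\hat b_{(h+1)}<a$. In the no-reduction branch they fill every seller slot. In the reduction branch, her membership in $T_S$ would force $N_a(\hat b_{(h)})\ge rh+1$, contradicting entry into that branch.
\end{itemize}
You also need the unchanged-index facts. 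A winner's own ask moves neither the boundary ask, nor the marginal bid, nor the set $T_S$, so her share is literally unchanged or she exits. A truthful loser can enter only at a price at most $s$: either the displaced seller becomes the new boundary ask, or she must cross the marginal bid $\hat b_{(\kappa)}$, which is at most $s$.

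These facts do imply the threshold structure you posit: the winning asks form a downward-closed set, the outcome on it is constant, and the price equals the threshold. So your framework survives. But establishing that structure for this side is exactly the nontrivial content of the theorem, and your proposal asserts it rather than proving it.
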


This theorem is the incentive guarantee behind the mechanism. It shows that the
randomization is used only for approximation, not to average away incentive
violations: after the mode and thresholds are realized, the resulting
deterministic clearing rule is already strategically well behaved in the WQIC
sense. The proof fixes an arbitrary realization and analyzes the induced
deterministic mechanism. For the peer-to-peer mode, consider the case $\Delta_B\ge \Delta_S$;
the other case is symmetric. In the AMM-only modes, and for the buyer side of the
peer-to-peer mode, users face standard critical-price rules once their quantity
report is fixed. The only delicate case is the seller side of the
peer-to-peer mode, where trade reduction may split supply among several sellers;
there the proof uses monotonicity of the clearing index and boundary price to
show that ask deviations cannot improve utility in the cases allowed by WQIC\@.
The quantity-understatement part follows from the threshold structure: an
understatement either leaves the user in the same bucket decision and changes
nothing, or removes the user from eligibility and gives zero utility. The full
proof is in Appendix~\ref{app:wqic}.

\begin{theorem}[Volume approximation]
\label{thm:approx}
For every $\gamma\ge1$, the user-side volume $V_M$ of $M$ satisfies
\[
  \E[V_M]\ge \frac{V^\ast}{6L^2}-\frac83\bar q,
  \qquad L=\max\{1,\lceil\log_2\gamma\rceil\}.
\]
\end{theorem}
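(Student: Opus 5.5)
Fix an optimal feasible IR uniform-price allocation $(z^\ast,\pi^\ast)$ attaining $V^\ast$. Write $Q_B^\ast,Q_S^\ast$ for its buyer and seller totals and set $m^\ast=\min\{Q_B^\ast,Q_S^\ast\}$. Then $V^\ast=2m^\ast+|Q_B^\ast-Q_S^\ast|$. Split this into a peer-to-peer part $2m^\ast$, which $M_1$ must recover, and a net AMM part $|Q_B^\ast-Q_S^\ast|$, which $M_2$ recovers when $Q_B^\ast>Q_S^\ast$ and $M_3$ recovers otherwise. Since each mode runs with probability $1/3$, it suffices to show
\[
\E[V_{M_1}]\ge \frac{2m^\ast}{2L^2}-c_1\bar q
\quad\text{and}\quad
\E[V_{M_2}]\text{ or }\E[V_{M_3}]\ge \frac{|Q_B^\ast-Q_S^\ast|}{2L}-c_2\bar q,
\]
with constants chosen so the additive losses total $\frac83\bar q$. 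Any split achieving $V^\ast/(6L^2)$ works, because the two parts sum to $V^\ast$ and $L\le L^2$.

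\textbf{AMM modes.} Assume $Q_B^\ast>Q_S^\ast$ (the other case is symmetric). The IR buyers served in the optimum have bids at least $\pi^\ast$, and feasibility gives $(Q_B^\ast-Q_S^\ast)\pi^\ast\ge C(Q_B^\ast-Q_S^\ast)$. Now pigeonhole over buckets. Some bucket $l$ carries at least a $1/L$ fraction of the buyer quantity $z^\ast_i$ on buyers with $b_i\ge\pi^\ast$. Each such buyer has $\hat q_i<2^{l+1}\underline q$ and $z_i^\ast\le\hat q_i$, so the number of retained buyers with bid at least $\pi^\ast$ at threshold $l$ is at least $(\text{bucket mass})/(2\Delta)$. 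I will use only the net amount $D=Q_B^\ast-Q_S^\ast$. Let $k=\lfloor D/(2L\Delta)\rfloor$. Since $C(q)/q$ is increasing, $C(k\Delta)\le k\Delta\,\pi^\ast$, while the $(k+1)$st retained bid is at least $\pi^\ast$. Hence $k^\ast\ge k$, and the volume is at least $D/(2L)-\Delta$. A subtlety is that bucket mass is measured on all served buyers, not just net ones. To handle this, I will pick the bucket maximizing the mass of buyers with $b_i\ge\pi^\ast$, whose total is at least $Q_B^\ast\ge D$. The bound $k^\ast\le n-1$ costs one more block. Averaging over $l_B$ then gives at least $\tfrac1L(D/(2L)-2\bar q)$-type terms. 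The rough count is acceptable because the target has an $L^2$ loss in any case.

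\textbf{Peer-to-peer mode.} This is the main obstacle. Let $B^\ast$ be the buyers with $b_i\ge\pi^\ast$ and $S^\ast$ the sellers with $s_j\le\pi^\ast$. Both sides have total quantity at least $m^\ast$. Choose the buyer bucket $l_B$ and seller bucket $l_S$ maximizing the $z^\ast$-mass on $B^\ast$ and $S^\ast$ respectively. Each carries at least $m^\ast/L$, and these choices are hit with probability $1/L^2$. At those thresholds, at least $m^\ast/(2L\Delta_B)$ retained buyers bid at least $\pi^\ast$, and at least $m^\ast/(2L\Delta_S)$ retained sellers ask at most $\pi^\ast$. Assume $\Delta_B\ge\Delta_S$ and $h_0=\lfloor m^\ast/(2L\Delta_B)\rfloor$. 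Then for $h\le h_0$ we have $a_h\le\pi^\ast\le\hat b_{(h)}$, because the first $rh_0$ sellers all ask at most $\pi^\ast$. Hence $\kappa\ge h_0$. Trade reduction loses at most one trade set, and every trade set has volume $2\Delta_B$. The volume is therefore at least $2(h_0-1)\Delta_B\ge m^\ast/L-4\bar q$, and the case $\Delta_B<\Delta_S$ is symmetric.

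The delicate points are the following. The condition $r(h-1)+1\le m$ must hold, and the sorted retained lists may include agents outside $B^\ast$ or $S^\ast$ with even better reports; these only help the ordering argument. The ratio $r$ must also be an integer, which holds because thresholds are powers of two. Multiplying by the hit probability $1/L^2$ gives $\E[V_{M_1}]\ge m^\ast/L^3$-type bounds. To reach $1/L^2$, I will instead average over all $(l_B,l_S)$. I will sum the per-bucket masses $m_B(l)$ and $m_S(l')$, bounding the volume at thresholds $(l,l')$ below by roughly $\min\{m_B(l),m_S(l')\}$ minus $O(\bar q)$. I then expect
\[
\sum_{l,l'}\min\{m_B(l),m_S(l')\}\ge m^\ast/L\ \text{-ish}
\]
after discarding crude terms. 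The additive loss, however, must be bounded by $O(\bar q)$ in expectation rather than $L^2\bar q$. Tuning this bookkeeping, with per-threshold losses weighted by $1/L^2$ and geometric $\Delta$'s summing to $O(\bar q)$, is the step I expect to be hardest to make match the stated constants $6L^2$ and $\frac83\bar q$.
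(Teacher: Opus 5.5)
Your decomposition $V^\ast=2m^\ast+|Q_B^\ast-Q_S^\ast|$ matches the paper's, and your AMM-mode argument is sound. Hitting the heaviest buyer bucket with probability $1/L$ gives $\E[V_{M_2}]\ge D/(2L^2)-2\bar q/L$. This is a factor $L$ weaker than the paper's $A_B/(2L)-2\bar q$, which averages over all thresholds using the truncated masses $a_l=\sum_{k\ge l}w_k2^{l-k-1}$. It still suffices, because $\frac13\cdot\frac{D}{2L^2}=\frac{D}{6L^2}$.

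\textbf{The gap is in the peer-to-peer mode, which is exactly where the $1/L^2$ factor has to be earned.} Your per-pair bound is right. Let $m_B(l)$ and $m_S(l')$ be the optimal buyer and seller masses in buckets exactly $l$ and $l'$. At thresholds $(l,l')$ there are at least $m_B(l)/(2\Delta_B)$ retained buyers bidding at least $\pi^\ast$, and at least $m_S(l')/(2\Delta_S)$ retained sellers asking at most $\pi^\ast$. Your $\kappa$-counting (equivalently Lemma~\ref{lem:m1-threshold}) then gives user-side volume at least $\min\{m_B(l),m_S(l')\}-4\bar q$. To conclude $\E[V_{M_1}]\ge m^\ast/L^2-O(\bar q)$ you need $\sum_{l,l'}\min\{m_B(l),m_S(l')\}\ge m^\ast$. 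The inequality you say you expect, $\gtrsim m^\ast/L$, is unproved and also too weak: after dividing by $L^2$ it yields only $m^\ast/L^3$, no better than your pigeonhole attempt. As written, the proposal does not reach $V^\ast/(6L^2)$.

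The missing step is that the pairwise minima already recover the whole peer mass, not a $1/L$ fraction of it. The paper takes a transportation matrix $(q_{kr})$ whose marginals are the bucket masses $u_k,v_r$ of the peer-to-peer component. Since $u_k\ge q_{kr}$ and $x_k\ge u_k/2$, the single threshold pair $(l,t)=(k,r)$ has $\min\{x_k,y_r\}\ge q_{kr}/2$. Charging each threshold pair only its own entry and summing gives $\sum_{l,t}\min\{x_l,y_t\}\ge Q/2$; the paper phrases this through truncated capacities, but only the diagonal term $l=k$, $t=r$ is used.

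With your untruncated masses there is an even shorter route. Since $m_B(l)\le Q_B^\ast$ and $m_S(l')\le Q_S^\ast$, we have $\min\{m_B(l),m_S(l')\}\ge m_B(l)\,m_S(l')/\max\{Q_B^\ast,Q_S^\ast\}$. Summing over $(l,l')$ gives exactly $Q_B^\ast Q_S^\ast/\max\{Q_B^\ast,Q_S^\ast\}=m^\ast$. Either way, $\E[V_{M_1}]\ge m^\ast/L^2-4\bar q$.

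Finally, the additive bookkeeping you flag as hardest is not an obstacle. The per-threshold loss is at most $4\bar q$ in $M_1$ and $2\bar q$ in $M_2$ or $M_3$. Taking expectations over equally likely thresholds averages these losses rather than summing them over the $L^2$ pairs. The total additive loss is therefore at most $\frac13(4+2)\bar q\le\frac83\bar q$.
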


The bound is $\Omega(1/(1+\log\gamma)^2)V^\ast-O(\bar q)$ uniformly
for $\gamma\ge1$, and has the original $\Omega(1/\log^2\gamma)$ form
as $\gamma\to\infty$. If $\gamma=1$, the quantity domain is a singleton
and $L=1$. There are no quantity deviations; Theorems~\ref{thm:feas-ir}
and~\ref{thm:wqic} therefore give full IC and a constant-factor volume
guarantee up to the additive term.

This theorem is the performance guarantee behind the bucket construction. The
proof compares the mechanism to a fixed optimal feasible IR uniform-price
allocation and decomposes that optimum into the three kinds of volume that our
modes target: peer-to-peer volume, AMM-to-buyer volume, and seller-to-AMM
volume. For the AMM-only modes, a random threshold keeps a logarithmic fraction
of the optimal truncated quantity, and the constant-product average-price
monotonicity preserves AMM feasibility up to one critical-price boundary block.
For the peer-to-peer mode, the main obstacle is that buyer and seller quantities
may fall in different buckets. The proof uses a transportation decomposition
between buyer-bucket mass and seller-bucket mass and then averages over the two
independent thresholds; a threshold pair captures a constant fraction of each
bucket-pair contribution in aggregate, producing the extra logarithmic loss. We
state a key lemma for the peer-to-peer mode here; it is the local threshold
guarantee that lets the global transportation argument reduce to bounding
truncated mass on the two sides. Its proof appears in
Appendix~\ref{app:approx}.

\begin{lemma}[Threshold guarantee for $M_1$]
\label{lem:m1-threshold}
Fix a feasible IR uniform-price allocation at price $p^\ast$ and a threshold pair
$(l,t)$. Let $\Delta_B=2^l\underline q$ and $\Delta_S=2^t\underline q$. Let
$B_l$ be the retained buyers with bid at least $p^\ast$, and let $S_t$ be the
retained sellers with ask at most $p^\ast$. Then, under $M_1$ with thresholds
$(l,t)$, the one-sided matched volume is at least
\[
  \min\{|B_l|\Delta_B,\ |S_t|\Delta_S\}-2\max\{\Delta_B,\Delta_S\}.
\]
\end{lemma}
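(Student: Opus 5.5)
The plan is to show that the clearing index $\kappa$ of $M_1$ is at least a quantity $h_0$ determined by $|B_l|$ and $|S_t|$. The matched volume is then read off from the two branches, each of which loses at most one trade set relative to $\kappa$. Here ``one-sided matched volume'' means the total quantity bought, which by feasibility (Theorem~\ref{thm:feas-ir}) equals the total quantity sold.

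Consider first the case $\Delta_B\ge\Delta_S$, with $r=\Delta_B/\Delta_S$ (an integer power of two). Write $n'=|B_l|$ and $m'=|S_t|$. Since retained buyers are sorted by decreasing bid, the first $n'$ of them all bid at least $p^\ast$. Likewise the first $m'$ sorted retained sellers all ask at most $p^\ast$. Set
\[
  h_0=\min\{n',\lfloor m'/r\rfloor\}.
\]
I claim that every $h$ with $1\le h\le h_0$ lies in the set defining $\kappa$. Indeed, $h\le n'\le n$, and $r(h-1)+1\le r h_0\le m'\le m$. Hence $a_h=\hat s_{(r(h-1)+1)}\le p^\ast\le \hat b_{(h)}$. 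Since $\kappa$ is defined as a maximum, it does not matter that the defining set need not be an interval; we obtain $\kappa\ge h_0$ directly.

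Next I bound the volume in the two branches.
\begin{itemize}
\item In the no-reduction branch, buyers receive $\kappa\Delta_B\ge h_0\Delta_B$ in total.
\item In the trade-reduction branch, buyers receive $(\kappa-1)\Delta_B\ge (h_0-1)\Delta_B$. Feasibility of the seller split was already checked in Theorem~\ref{thm:feas-ir}.
\end{itemize}
So in either branch the volume is at least $h_0\Delta_B-\Delta_B$. Using $\lfloor m'/r\rfloor\ge m'/r-1$, we get
\[
  h_0\Delta_B\ge\min\{n'\Delta_B,\ m'\Delta_S-\Delta_B\}\ge \min\{n'\Delta_B,\ m'\Delta_S\}-\Delta_B.
\]
The volume is therefore at least $\min\{|B_l|\Delta_B,|S_t|\Delta_S\}-2\Delta_B$, which is the claim since $\Delta_B=\max\{\Delta_B,\Delta_S\}$.

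If $h_0=0$, or if the allocation has $n=0$ or $m=0$, the right-hand side is non-positive and the bound is trivial. The case $\Delta_B<\Delta_S$ is symmetric. Take $r=\Delta_S/\Delta_B$ and $h_0=\min\{m',\lfloor n'/r\rfloor\}$. Check that $c_h=\hat b_{(r(h-1)+1)}\ge p^\ast\ge \hat s_{(h)}$ for all $h\le h_0$. Then lose at most one $\Delta_S$ block to rounding and one to trade reduction.

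I expect the only real subtlety to be the step $\kappa\ge h_0$. This requires checking that both index constraints in the definition of $\kappa$ ($h\le n$ and $r(h-1)+1\le m$) hold for $h\le h_0$, and noticing that the maximum formulation makes monotonicity of the condition in $h$ unnecessary. The IR property of the fixed allocation enters only to guarantee that $B_l$ and $S_t$ form prefixes of the sorted orders on the relevant side of $p^\ast$. Ties at $p^\ast$ are harmless because the comparisons are weak.
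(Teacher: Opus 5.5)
Your proposal is correct and follows essentially the same argument as the paper. You show that $h_0=\min\{|B_l|,\lfloor |S_t|/r\rfloor\}$ lies in the set defining $\kappa$, lose at most one block to trade reduction, and absorb the floor rounding into a second $\max\{\Delta_B,\Delta_S\}$ term. One minor remark: inside the lemma, IR is not needed even for the prefix property, since $B_l$ and $S_t$ are defined by comparing reports with $p^\ast$. IR matters only later, in Theorem~\ref{thm:approx}, to place the benchmark's served users inside $B_l$ and $S_t$.
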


Appendix~\ref{app:approx} proves Lemma~\ref{lem:m1-threshold} and completes the
proof of Theorem~\ref{thm:approx} by aggregating this local guarantee over all
bucket pairs and combining it with the AMM-only modes.

\section{Why Full IC Is Too Strong}

The positive result achieves a polylogarithmic volume guarantee under
ex-post WQIC. The following theorem bounds worst-case volume under ex-post
full IC by an \(O(1/\gamma)\) fraction of the benchmark \(V^\ast\).
In fact, the bound holds for the larger class of mechanisms that are fully
IC in expected utility. Since ex-post full IC implies full IC in expected
utility, this establishes the claimed separation between the two ex-post
incentive requirements.

\begin{theorem}[Full IC upper bound]
\label{thm:lower}
For every randomized uniform-price mechanism $M'$ that is feasible, ex-post IR,
and fully IC in expected utility, for every $\gamma>1$, integer $N\ge 1$, reserve scale
$R>0$, reserve ratio $\rho>0$, and vector of public seller penalty factors
$\boldsymbol{\lambda}=(\lambda_1,\ldots,\lambda_N)\in[1,\infty)^N$, there is
an instance $I_{\gamma,N,R,\rho,\boldsymbol{\lambda}}$ with $N$ buyers and $N$
sellers, public penalty vector $\boldsymbol{\lambda}$, initial AMM state
$(x_0,y_0)=(R,\rho R)$, and
$\bar q/\underline q=\gamma$, such that
\[
  V^\ast(I_{\gamma,N,R,\rho,\boldsymbol{\lambda}})=2N\bar q,
  \qquad
  \E[V_{M'}(I_{\gamma,N,R,\rho,\boldsymbol{\lambda}})]\le
  \frac{3}{\gamma}V^\ast(I_{\gamma,N,R,\rho,\boldsymbol{\lambda}}).
\]
\end{theorem}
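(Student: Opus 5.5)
The plan is to build one explicit instance and then apply the incentive constraints to a single buyer, with everyone else's reports held fixed. In the instance, every buyer reports $(b,\bar q)$ and every seller reports $(s,\bar q)$, with $s<b$ chosen relative to the AMM's marginal price $\rho=y_0/x_0$. I would place $\rho$ strictly inside $(s,b)$. The price $p$ is then chosen in $[s,b]$, so at price $p$ the allocation in which every buyer buys $\bar q$ from every seller is IR and leaves the AMM untouched. That gives $V^\ast\ge 2N\bar q$. The matching upper bound $V^\ast\le 2N\bar q$ holds because every user's quantity is capped at $\bar q$. None of this uses $R$, $\rho$ or $\boldsymbol\lambda$ beyond positivity, because the benchmark allocation never touches the pool. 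So the only remaining freedom is the choice of $b$ and $s$; I would take them as $b=\gamma\rho$ and $s=\rho/\gamma$, or some similar spread around $\rho$, to be fixed by the incentive step below.

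Fix a buyer and all other reports. Let $x(v,q)$ be her expected allocation and $U(v,q)$ her expected truthful utility when her type is $(v,q)$. Restrict to reports with quantity $\bar q$. Full IC in expected utility then gives the one-dimensional Myerson structure: $x(\cdot,\bar q)$ is monotone and $U(v,\bar q)=U(0,\bar q)+\int_0^v x(t,\bar q)\,dt$. Ex-post IR together with non-negative prices forces $U(0,\bar q)=0$. Next I would use the cross-quantity constraint. A type $(\gamma v,\underline q)$ may report $(v,\bar q)$, and since $\min\{z,\underline q\}\ge z/\gamma$ for $z\le\bar q$, this deviation gives her at least $\gamma v\cdot \E[z]/\gamma-\E[\pi z]=U(v,\bar q)$. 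Hence $U(\gamma v,\underline q)\ge U(v,\bar q)$. On the other side, the truthful utility of a $\underline q$-type is at most $(\gamma v-\E[\pi\mid\text{trade}])\underline q$. The aim is to choose $b,s$ so that these two inequalities together give $\int_0^b x(t,\bar q)\,dt\lesssim b\,\underline q$. Combined with monotonicity, $x(b,\bar q)\cdot(b-b')\le\int_{b'}^b x$, and that should yield $x(b,\bar q)=O(\underline q)=O(\bar q/\gamma)$.

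The symmetric argument on a seller gives $\E[z_j]=O(\bar q/\gamma)$. In that argument the seller type is $(s/\gamma,\underline q)$ in the role of the small-quantity type, and the public $\lambda_j\ge1$ only makes over-reporting more costly, so it can only help. Summing over $2N$ users bounds the expected volume by $2N\cdot O(\bar q/\gamma)$. The constant $3$ would come from spending one $\bar q/\gamma$ on the integral bound and the rest on the slack $b-b'$ in the monotonicity step.

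The main obstacle is the step that turns $U(\gamma v,\underline q)\ge U(v,\bar q)$ into a genuine upper bound. The trivial bound $U(\gamma v,\underline q)\le \gamma v\underline q=v\bar q$ cancels the factor $\gamma$ exactly. I would try to escape this in two ways. First, I would use the price lower bound: any trade the buyer makes clears at a price at least $s$, either from sellers or from the AMM, whose average sell price exceeds $\rho$. This means the truthful utility of a $\underline q$-type is at most $(\gamma v-s)\underline q$, not $\gamma v\underline q$. Second, I would use the gap between $b$ and $s$ so that the integral $\int_s^b x$ captures most of $U$. If this does not close by itself, my fallback is to apply the Myerson identity also along the $\underline q$ line and compare the two integrals pointwise at $t$ and $\gamma t$. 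The reserve-scale and ratio independence should then only require keeping $s<\rho<b$ for every $\rho$.
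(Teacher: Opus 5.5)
Your proposal has a genuine gap, and it sits at the obstacle you flagged yourself: the buyer-side route cannot be made to work. A buyer's units beyond her true demand have zero value but must still be paid for. So a $\underline q$-type who overstates to $\bar q$ is punished, not rewarded, by a large allocation. The most the cross-quantity constraint can give is $U(\gamma v,\underline q)\ge U(v,\bar q)$, and the cancellation of $\gamma$ is structural, not an artifact of a loose estimate.

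Your choice $\rho\in(s,b)$ makes this fatal. Consider the mechanism that sells every buyer with $\hat b\ge p$ her reported quantity from the AMM at the fixed price $p=y_0/(x_0-N\bar q)$. It is feasible, ex-post IR, and dominant-strategy IC. On your instance it gives every buyer $\bar q$ whenever $N\bar q\le R(1-\rho/b)$. So no incentive argument, including your Myerson-integral fallback, can yield $x(b,\bar q)=O(\bar q/\gamma)$ there, and your claim that $R$ plays no role is false.

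The paper instead sets the buyers' value equal to the AMM's marginal price, $b_1=\rho=y_0/x_0$. Then buyer IR plus LND feasibility rule out any net AMM sale, so $Q_B\le Q_S$ in every realization, and only sellers need to be bounded.

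The seller step is where full IC actually bites, because a low-cost seller who overstates supply profits from the extra units. Your parameters break it too. The paper takes high sellers $(\rho/2,\bar q)$ and, for each $j$, a low type $(s_{0j},\underline q)$ with $s_{0j}=\rho/(6\lambda_j)$.
\begin{itemize}
\item The low type's truthful utility is at most $\rho\,\underline q$. Her capacity is $\underline q$, and the price is at most $\rho$: by buyer IR if a buyer trades, and by LND feasibility if the AMM net buys.
\item Deviating to the high report yields at least $(s_1-\lambda_j s_{0j})\E[z_j^H]=(\rho/3)\E[z_j^H]$, by the high seller's ex-post IR.
\item Hence $\E[z_j^H]\le 3\underline q$ and $\E[V]\le 2\E[Q_S]\le 6N\underline q=(3/\gamma)V^\ast$.
\end{itemize}
With your $b=\gamma\rho$, $s=\rho/\gamma$ and low type $(s/\gamma,\underline q)$, the same computation gives only $\E[z_j]\lesssim (b/s)\underline q=\gamma\bar q$, which is vacuous. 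What you need is $b/(s-\lambda_j s_0)=O(1)$.

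Moreover, the penalty $\lambda_j$ does not ``only help''. It shrinks the deviator's per-unit margin to $s-\lambda_j s/\gamma$, which is non-positive once $\lambda_j\ge\gamma$. This is why the low type's cost must scale like $1/\lambda_j$ rather than like $s/\gamma$.
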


The construction applies at every prescribed initial reserve ratio $\rho$ and
allows the common reserve scale $R$ to be arbitrarily large. The user values
are rescaled with $\rho$, so this is an existential worst-case construction for
each reserve composition rather than a claim that one fixed user profile is
hard for all reserve ratios. Because
$V^\ast=2N\bar q$, the positive result's additive $O(\bar q)$ loss is
$O(1/N)$ after normalization. Our positive-result mechanism also matches the
$O(1/\gamma)$ dependence up to constants: fixing all of its bucket thresholds
to $\underline q$ gives a fully IC variant with an $\Omega(1/\gamma)$
guarantee up to the boundary loss.

Appendix~\ref{app:lower-proof} gives the complete proof of
Theorem~\ref{thm:lower}. To show that the limitation also holds when the AMM is
active in the benchmark, we give the following companion result.

\begin{proposition}[AMM-active full-IC upper bound]
\label{prop:lower-amm-active}
For every randomized uniform-price mechanism $M'$ that is feasible, ex-post IR,
and fully IC in expected utility, for every $\gamma>1$, integer $N\ge 1$, reserve scale
$R>0$, reserve ratio $\rho>0$, and vector of public seller penalty factors
$\boldsymbol{\lambda}=(\lambda_1,\ldots,\lambda_N)\in[1,\infty)^N$, there is
an instance $I^{\mathrm{AMM}}_{\gamma,N,R,\rho,\boldsymbol{\lambda}}$ with $N$
sellers and no buyers, public penalty vector $\boldsymbol{\lambda}$, initial
AMM state $(x_0,y_0)=(R,\rho R)$, and quantity bounds $\bar q=R/N$ and
$\bar q/\underline q=\gamma$, such that
\[
  V^\ast(I^{\mathrm{AMM}}_{\gamma,N,R,\rho,\boldsymbol{\lambda}})=N\bar q=R,
  \qquad
  \E[V_{M'}(I^{\mathrm{AMM}}_{\gamma,N,R,\rho,\boldsymbol{\lambda}})]\le
  \frac{3}{\gamma}
  V^\ast(I^{\mathrm{AMM}}_{\gamma,N,R,\rho,\boldsymbol{\lambda}}).
\]
The benchmark trades all $N\bar q$ units from the sellers to the AMM.
\end{proposition}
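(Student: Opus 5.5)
The plan is to build the instance explicitly, check that the benchmark equals $R$, and then bound each seller's expected allocation by one full-IC comparison against a fictitious low-quantity, zero-cost type of the same seller. That type pays no over-supply penalty, so the bound holds for every $\boldsymbol{\lambda}$.

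\textbf{Instance.} Take $(x_0,y_0)=(R,\rho R)$, no buyers, and $N$ sellers. Every seller has true type $(s,\bar q)$ with $\bar q=R/N$, $\underline q=\bar q/\gamma$, and common ask $s=\rho/3$. The penalty vector is the prescribed $\boldsymbol{\lambda}$.

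\textbf{Benchmark.} $V^\ast\le N\bar q=R$ holds trivially, since every seller's allocation is capped at $\bar q$. For the matching lower bound, let all sellers sell $\bar q$ at the uniform price $\pi=\rho/2$, so $Q=R$ units flow to the AMM. Then
\[
(x_0+R)(y_0-\pi R)=2R\cdot\tfrac{\rho R}{2}=x_0y_0,
\]
so the allocation is feasible. It is also IR because $\pi=\rho/2\ge\rho/3=s$ and no seller is over-allocated. Hence $V^\ast=R=N\bar q$, and the benchmark sends all units from the sellers to the AMM.

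\textbf{Price cap.} In any feasible sellers-only uniform-price outcome with total quantity $Q>0$ sold to the AMM, the constraint $(x_0+Q)(y_0-\pi Q)\ge x_0y_0$ gives
\[
\pi\le \frac{y_0}{x_0+Q}<\rho .
\]
Every realization of $M'$ is feasible, so this cap holds in every realization.

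\textbf{IC comparison.} Fix seller $j$ and the truthful reports of the others. Consider the fictitious true type $(0,\underline q)$ for seller $j$.
\begin{itemize}
\item Reporting truthfully, this type receives at most $\underline q$ units by quantity feasibility. Each unit sells at price below $\rho$, and the cost is zero. Its expected utility is therefore at most $\rho\,\underline q$.
\item Alternatively, it can report $(s,\bar q)$, which is exactly the report seller $j$ makes in our instance. Because its cost is zero, the penalty term $(\lambda_j-1)\cdot 0\cdot\max\{z_j-\underline q,0\}$ vanishes for every $\lambda_j$. Its realized utility is then just $\pi z_j$.
\item By ex-post IR of the genuine type $(s,\bar q)$, every realization with $z_j>0$ has $\pi\ge s$ (and $z_j\le\bar q$, so there is no penalty). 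Hence the utility of this deviation is at least $s\,z_j$ in every realization.
\end{itemize}
Full IC in expected utility for the type $(0,\underline q)$ then yields
\[
s\,\E[z_j]\le\rho\,\underline q,
\qquad\text{so}\qquad
\E[z_j]\le 3\underline q .
\]

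\textbf{Summing.} Adding over the $N$ sellers gives
\[
\E[V_{M'}]\le 3N\underline q=\frac{3}{\gamma}N\bar q=\frac{3}{\gamma}V^\ast .
\]

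The step that needs care is the deviation comparison, which is where a naive argument fails. Using a positive-cost small type would bring in the public penalty and break uniformity over $\boldsymbol{\lambda}$. Choosing cost $0$ removes that dependence, while the IR of the actual type $(s,\bar q)$ supplies the lower bound $\pi\ge s$ on the price it faces.
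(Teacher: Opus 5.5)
Your proof is correct and is essentially the paper's argument: the same sellers-only instance on reserves $(R,\rho R)$ with benchmark price $\rho/2$, the same LND price cap $\pi<\rho$ bounding the low type's truthful utility by $\rho\underline q$, and the same full-IC deviation to the hard report combined with ex-post IR of the genuine type, which gives $\E[z_j]\le 3\underline q$. The only difference is the choice of parameters. You use high cost $\rho/3$ and a zero-cost low type so the penalty vanishes, whereas the paper uses high cost $\rho/2$ and low costs $\rho/(6\lambda_j)$ with $s_1-\lambda_j s_{0j}=\rho/3$. This also shows your closing remark is overstated: positive-cost low types do not break uniformity over $\boldsymbol{\lambda}$ once their cost is scaled by $1/\lambda_j$.
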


Thus the full-IC limitation is not confined to an inactive AMM. The proof of
Proposition~\ref{prop:lower-amm-active} is in
Appendix~\ref{app:lower-proof}.

\section{Conclusion}
Our results separate two ways of handling private quantities in uniform-price
AMM markets. Full two-dimensional IC limits worst-case user-side volume to an
$O(1/\gamma)$ fraction of the feasible optimum. Binary quantity thresholds,
combined with per-realization WQIC, recover a polylogarithmic guarantee
up to an additive boundary loss while
preserving uniform pricing, individual rationality, and the pool's reserve
constraint. The upper bound applies at every initial reserve ratio and at
arbitrary reserve scale, even when optimal volume is large relative to any
individual order. Thus adding liquidity alone does not remove this incentive
obstruction. The AMM-active construction further shows that the obstruction
can persist when the benchmark trades entirely against the pool.

The constructive proof identifies the role of quantity heterogeneity.
Thresholds replace heterogeneous reported quantities by fixed blocks, making
critical-price reasoning possible; independently selecting a threshold on
each side costs $L^2$ in peer-to-peer volume. Coordinating these thresholds
may offer sharper guarantees. Whether ex-post WQIC admits an
$\Omega(1/\log\gamma)$ or constant-factor guarantee, or instead requires
some logarithmic loss, remains open.

The threshold design links the incentive and volume arguments. Once a mode
and its thresholds are fixed, retained orders trade in blocks whose sizes
are independent of the exact quantity reports above the thresholds.
Critical prices and the sharing rule then establish WQIC for each realized
clearing rule. Randomizing over these rules recovers volume across quantity
scales while preserving their ex-post guarantees. Together with the upper bound for full IC in expected utility, this
establishes a separation in worst-case volume between ex-post WQIC and
ex-post full IC.

Another direction is to study the symmetric relaxation in the value dimension.
Our WQIC notion conditionally protects unit-value reports at each fixed
quantity report and rules out quantity understatements at the truthful value. One
could instead ask for mechanisms that are fully IC in
quantities but satisfy only a weak form of value IC, for
example ruling out only value understatements or overstatements that preserve
non-negative realized utility. Such a ``weak value IC'' relaxation may lead to a
different tradeoff between volume, uniform pricing, and AMM feasibility.

Finally, the objective measures executed user order flow. Our approximation
compares a mechanism and the feasible optimum on the same input; it is not
a welfare or equilibrium-volume guarantee. A welfare analysis would need to
account for surplus margins and the value of the AMM's ending inventory.
Other reserve functions, fees, multiple pools, and dynamic batches offer
further directions, but also change the feasibility and pricing arguments.
The central question in these extensions is which strategic protections can
coexist with substantial order execution at a uniform price.


\clearpage
\bibliographystyle{splncs04}
\bibliography{refs}

\clearpage
\appendix

\section{A Concrete Example for $M_1$}
\label{app:m1-example}

Consider thresholds $\Delta_B=4$ and $\Delta_S=2$, so $r=2$: one buyer block
is paired with two seller blocks. Every retained buyer reports quantity at
least $4$, and every retained seller reports quantity at least $2$. Listed in
the mechanism's sorted order, suppose the retained bids are
\[
  10>8>6
\]
and the retained asks are
\[
  3<4<7<9<11.
\]
The effective seller asks are $a_1=3$, $a_2=7$, and $a_3=11$. Hence the first two
candidate trade sets satisfy the crossing condition ($3\le10$ and $7\le8$),
while the third does not ($11>6$), so $\kappa=2$. The boundary ask is
$\alpha_\kappa=\hat s_{(5)}=11>\hat b_{(2)}=8$, and the mechanism therefore
uses trade reduction. It sets the uniform price to $8$, trades only the first
$\kappa-1=1$ buyer for four units, and forms
\[
  T_S=\{j:\hat s_j\le8\}=\{1,2,3\}.
\]
These three sellers split the four units equally, each selling $4/3\le
\Delta_S\le\hat q_j$. The winning buyer receives
$\Delta_B\le\hat q_i$. Total buyer and seller allocations are both four, so
the AMM is unchanged; every winner accepts the uniform price; and the marginal
buyer, who sets that price, is removed.

For comparison, replace the ask profile by $3<4<7<7.2<7.5$. The clearing index
is still $\kappa=2$, but now $\alpha_\kappa=7.5\le8$. The no-reduction branch
trades the first two buyers for four units each and the first four sellers for
two units each, all at price $7.5$. This illustrates how the single boundary
comparison selects between the no-reduction and trade-reduction branches.

\section{Proof of Theorem~\ref{thm:wqic}}
\label{app:wqic}

\begin{proof}
Fix a realization $\omega$ of the randomization. Thus the selected mode and
the sampled threshold or thresholds are fixed. We prove the two WQIC
requirements for the deterministic mechanism $M^\omega$.

Throughout the proof, all ties are resolved by the fixed report-independent
priority rule. In every critical-price statement below, equality with the
numerical threshold is subject to this rule; a tied user need not win. For $M_1$, we prove the case $\Delta_B\ge\Delta_S$ and set
$r=\Delta_B/\Delta_S$; the case $\Delta_B<\Delta_S$ is symmetric after
exchanging the roles of buyers and sellers.

\medskip
\noindent
\emph{Critical-price fact.}
Suppose a buyer with true type $(b,q)$ receives a fixed block $Q$ iff her bid
is at least a threshold $\tau$ that is independent of her own bid, receives
zero otherwise, and pays $\tau Q$ whenever she receives $Q$. Then every winning
bid gives the same utility
\[
  b\min\{Q,q\}-\tau Q,
\]
and every losing bid gives utility zero. Hence truthful bidding is optimal
among all bid reports whenever truthful bidding gives non-negative realized
utility; moreover, if truthful bidding gives non-positive realized utility,
then every bid report gives non-positive realized utility. The seller analogue
holds when a seller sells a fixed block $Q$ iff her ask is at most a threshold
$\tau$ that is independent of her own ask, receives zero otherwise, and is paid
$\tau Q$ whenever she sells.

Table~\ref{tab:wqic-critical} records the cases in which the critical-price
fact applies directly.

\begin{table}[ht]
\centering
\small
\caption{Direct critical-price cases; equality is resolved by the fixed priority rule.}
\label{tab:wqic-critical}
\begin{tabular}{c c c c c}
\hline
Mode & Side & Winning set & Block & Critical price \\
\hline
$M_2$ & Buyer &
$\hat b\ge \tau_i$ &
$\Delta$ &
$\tau_i=\hat b^{-i}_{(K_i)}$ \\

$M_3$ & Seller &
$\hat s\le \tau_j$ &
$\Delta$ &
$\tau_j=\hat s^{-j}_{(K_j)}$ \\

$M_1$ & Buyer &
$\hat b\ge \tau_i$ &
$\Delta_B$ &
$\tau_i=\min\{\hat s_{(rK_i+1)},\hat b^{-i}_{(K_i-1)}\}$ \\
\hline
\end{tabular}
\end{table}

For $M_2$, fix the sampled block size $\Delta$ and all other retained buyer
reports. Let buyer $i$ be the distinguished retained buyer, and write the other
retained buyer bids as
\[
  \hat b^{-i}_{(1)}\ge\hat b^{-i}_{(2)}\ge\cdots.
\]
If buyer $i$ is among the first $k$ buyers in the full order, then the first
losing bid is $\hat b^{-i}_{(k)}$. Hence the AMM feasibility test for clearing
$k$ buyers is
\[
  k\hat b^{-i}_{(k)}\Delta\ge C(k\Delta),
  \qquad k\Delta<x_0,
\]
which is independent of buyer $i$'s own bid. Let
\[
  K_i=\max\{k\ge 1:
      k\hat b^{-i}_{(k)}\Delta\ge C(k\Delta),\ k\Delta<x_0
   \},
\]
where $k$ ranges over available other buyers, with $K_i=0$ if the set
is empty. The bids decrease with $k$, while $C(k\Delta)/(k\Delta)$
increases, so feasible positive indices form an initial segment.
If $K_i=0$, buyer $i$ never wins. If
$K_i>0$, buyer $i$ wins iff her bid is at least
$\hat b^{-i}_{(K_i)}$, receives $\Delta$, and pays
$\hat b^{-i}_{(K_i)}\Delta$. Thus buyer $i$ faces a critical bid
$\tau_i=\hat b^{-i}_{(K_i)}$.

The argument for $M_3$ is symmetric. Fix the sampled block size $\Delta$ and
all other retained seller reports. Let seller $j$ be the distinguished retained
seller, and write the other retained seller asks as
\[
  \hat s^{-j}_{(1)}\le\hat s^{-j}_{(2)}\le\cdots.
\]
If seller $j$ is among the first $k$ sellers in the full order, then the first
losing ask is $\hat s^{-j}_{(k)}$, so the AMM feasibility test for clearing
$k$ sellers is
\[
  k\hat s^{-j}_{(k)}\Delta\le -C(-k\Delta),
\]
which is independent of seller $j$'s own ask. Let
\[
  K_j=\max\{k\ge 1:
      k\hat s^{-j}_{(k)}\Delta\le -C(-k\Delta)
   \},
\]
where $k$ ranges over available other sellers, with $K_j=0$ if the set
is empty. The asks increase with $k$, while $-C(-k\Delta)/(k\Delta)$
decreases, so feasible positive indices form an initial segment.
If $K_j=0$, seller $j$ never wins. If
$K_j>0$, seller $j$ faces a
critical ask $\tau_j=\hat s^{-j}_{(K_j)}$, sells $\Delta$ iff her ask is at
most $\tau_j$, and is paid $\tau_j\Delta$. This proves the unit-value WQIC
condition for buyers in $M_2$ and sellers in $M_3$, including the case in which
truthful unit-value reporting gives non-positive utility. The inactive side in
each mode receives zero for all unit-value reports.

We now handle the buyer side of $M_1$. Fix all seller reports and all other
buyer reports. Let buyer $i$ be the distinguished retained buyer. Write the
other retained buyer bids as
\[
  \hat b^{-i}_{(1)}\ge\hat b^{-i}_{(2)}\ge\cdots ,
\]
with the convention $\hat b^{-i}_{(0)}=+\infty$, and write the retained seller
asks as
\[
  \hat s_{(1)}\le\hat s_{(2)}\le\cdots\le\hat s_{(m)}.
\]
Define
\[
  K_i=
  \max\Bigl(
  \{0\}\cup
  \{1\le h\le n:
       r(h-1)+1\le m,\
       \hat s_{(r(h-1)+1)}\le \hat b^{-i}_{(h-1)}
   \}
  \Bigr).
\]
Here $n$ is the total number of retained buyers. If $K_i=0$, buyer $i$ is never allocated. If $K_i>0$, define
\[
  \tau_i=
  \min\{\hat s_{(rK_i+1)},\hat b^{-i}_{(K_i-1)}\},
\]
where $\hat s_{(rK_i+1)}=+\infty$ if $rK_i+1>m$.

If $\tau_i=+\infty$, buyer $i$ never wins. Otherwise,
we claim that buyer $i$ wins iff her bid is at least $\tau_i$, and whenever she
wins she receives $\Delta_B$ and pays $\tau_i\Delta_B$. If
$\hat s_{(rK_i+1)}\le \hat b^{-i}_{(K_i-1)}$, then the no-reduction branch
applies at index $K_i$: the first losing seller ask is
$\hat s_{(rK_i+1)}$, and the price is
$\hat s_{(rK_i+1)}=\tau_i$. If instead
$\hat s_{(rK_i+1)}>\hat b^{-i}_{(K_i-1)}$, then the trade-reduction branch
applies, the marginal buyer bid is $\hat b^{-i}_{(K_i-1)}$, and the price is
$\hat b^{-i}_{(K_i-1)}=\tau_i$. In either case, buyer $i$ receives the fixed
block $\Delta_B$ exactly when her bid reaches $\tau_i$, and pays
$\tau_i\Delta_B$ whenever she wins. The critical-value fact proves the
unit-value WQIC condition for buyers in $M_1$, including the case in which
truthful bidding gives non-positive utility.

It remains to handle the seller side of $M_1$, where the trade-reduction branch
may allocate a shared fractional amount. Fix all buyer reports and all seller
reports except seller $j$'s ask. Let seller $j$ have true type $(s,q)$ and
public factor $\lambda$. For an ask report $\hat s$ of seller $j$, let $\kappa(\hat s)$ be the
clearing index, $p^\ast(\hat s)$ the resulting uniform price, and $Q_j(\hat s)$ seller
$j$'s allocated quantity. Also write
\[
  N_{\hat s}(p)=|\{\text{retained sellers with ask at most }p\}|.
\]
Changing seller $j$'s ask changes each count $N_{\hat s}(p)$ by at most one. Lowering
the ask can only weakly increase the set of feasible indices, while raising the
ask can only weakly decrease it.

We first record two boundary facts:
\[
\begin{array}{ll}
\text{(B1)} &
a<s,\ \kappa(a)>\kappa(s)
\quad\Longrightarrow\quad
p^\ast(a)<s,
\\[0.4em]
\text{(B2)} &
a>s,\ \kappa(a)<\kappa(s)
\quad\Longrightarrow\quad
Q_j(a)=0.
\end{array}
\]
For (B1), let $h=\kappa(a)$. Since index $h$ becomes feasible after the
lower-ask deviation but was not feasible under truthful asking, seller $j$ must
be newly counted at a buyer boundary price $\hat b_{(h)}$ at which she was not
counted truthfully. Hence
\[
  a\le \hat b_{(h)}<s.
\]
The new uniform price is at most $\hat b_{(h)}$: in the trade-reduction branch
it is exactly~$\hat b_{(h)}$, while in the no-reduction branch it is the first
losing seller-side boundary ask and is no larger than $\hat b_{(h)}$. Therefore
$p^\ast(a)<s$.

For (B2), let $h=\kappa(a)$. If $h=0$, no trade occurs. Otherwise, since index $h+1$ was feasible under truthful
asking but is not feasible after the higher-ask deviation, seller $j$ must have
been counted at the buyer boundary price $\hat b_{(h+1)}$ under truthful asking
and not counted there after the deviation. Thus
\[
  s\le \hat b_{(h+1)}<a.
\]
After the deviation, exactly $rh$ other sellers have asks at most
$\hat b_{(h+1)}<a$, and hence at least $rh$ sellers are ranked ahead
of seller $j$. If the new outcome at index $h$ is a no-reduction outcome, these
sellers fill all seller slots, so seller $j$ is not traded. If the new outcome
is a trade-reduction outcome, seller $j$ could be in the reduction set only if
$a\le p^\ast(a)=\hat b_{(h)}$. But then the same $rh$ sellers counted at
$\hat b_{(h+1)}$, together with seller $j$, would imply
\[
  N_a(\hat b_{(h)})\ge rh+1,
\]
contradicting the fact that the mechanism entered the reduction branch at index~$h$. Hence $Q_j(a)=0$.

Table~\ref{tab:wqic-m1-seller} summarizes the deviations. In its first
four rows, assume $Q_j(s)>0$. Write $U(s)$ for truthful realized utility
and $U(a)$ for utility after the deviation. The quantity report is fixed
and may exceed the true quantity, so $U(s)$ need not be non-negative.

\begin{table}[ht]
\centering
\small
\caption{Seller-side deviations in $M_1$ (first four rows: truthful winner).}
\label{tab:wqic-m1-seller}
\begin{tabular}{c c c c}
\hline
Case & Effect on $\kappa$ & Key fact & Conclusion \\
\hline
$a<s$ & unchanged & same price and share & $U(a)=U(s)$ \\
$a<s$ & increases & (B1): $p^\ast(a)<s$ & $U(a)\le0$ \\
$a>s$ & unchanged & same outcome or exit & $U(a)\in\{U(s),0\}$ \\
$a>s$ & decreases & (B2): no allocation & $U(a)=0$ \\
$Q_j(s)=0,\ a>s$ & -- & remains unallocated & $U(a)=0$ \\
$Q_j(s)=0,\ a<s$ & -- & new trade at $p^\ast(a)\le s$ & $U(a)\le0$ \\
\hline
\end{tabular}
\end{table}

Suppose first that seller $j$ trades truthfully. Lowering her ask while
keeping the index unchanged leaves her price and quantity unchanged.
In the no-reduction branch, the price is an order statistic outside the
traded set; lowering a winning ask does not change this boundary. In the
reduction branch, the price is the fixed marginal buyer bid. The seller
already qualifies at this price, so the qualifying set and every equal
share remain unchanged. A lower ask that increases the index instead
satisfies (B1), so any positive allocation gives non-positive utility.
This proves the first two rows. Crucially, the second row asserts
$U(a)\le0$, not $U(a)\le U(s)$ when $U(s)<0$.

For a higher ask at an unchanged index, if the seller continues trading,
the same outside boundary or marginal buyer bid sets the price. She remains
within the selected set or the same reduction set, respectively, so her
quantity is unchanged too. If she leaves this set she gets zero. A higher
ask that reduces the index gives zero by (B2). Thus every deviation by a
truthful winner yields either $U(s)$ or non-positive utility.

Suppose now that seller $j$ does not trade truthfully. Raising her ask
cannot select her: at an unchanged index she remains outside the winning
set, and at a smaller index (B2) applies. If lowering her ask increases
the index, (B1) applies. At an unchanged index, entry into a reduction
outcome requires crossing the buyer price, which is at most $s$. Entry
into a no-reduction outcome places her among the first $r\kappa$ sellers;
the displaced seller determines the new outside boundary, at most her
original ask $s$. Thus any new allocation is paid at most $s$, and its
utility is non-positive. Equality at a tie causes no difficulty; all
allocation ties use the fixed priority rule.

Consequently, if $U(s)\ge0$, no deviation yields more than $U(s)$; if
$U(s)\le0$, every deviation gives non-positive utility. This establishes
both sign clauses without assuming that the fixed quantity report is
truthful. The over-supply penalty only reduces utility in the cases paid
at or below the true unit cost.

For $\Delta_B<\Delta_S$, the fixed-block side is the seller, and the
buyer side shares quantity in the reduction branch. Exchange the two
sides in the boundary argument: a higher buyer bid that increases the
index pays above true value, and a lower bid that decreases it receives
zero. At an unchanged index, a continuing winner has the same price and
quantity; a truthful loser can enter only at a price at least her true
value. Since units beyond a buyer's true demand have zero marginal value,
these newly obtained allocations have non-positive utility. The same two
sign clauses follow for buyers; fixed-block seller incentives follow from
the critical-price fact.

Combining Table~\ref{tab:wqic-critical} with
Table~\ref{tab:wqic-m1-seller}, truthful unit-value reporting satisfies the
first WQIC requirement for every user in the fixed realization $\omega$: if
truthful unit-value reporting with the fixed quantity report gives
non-negative utility, no unit-value deviation gives higher utility; and if it
gives non-positive utility, then every unit-value report with the same quantity
report gives non-positive utility.

We now prove the quantity-understatement requirement. Fix a user, her truthful
unit value, her true quantity $q$, and all other reports. Compare truthful
quantity reporting with an understated report $\hat q\le q$.

In the fixed realization $\omega$, let $\Delta$ be the sampled threshold block
size relevant for this user: for $M_2$ or $M_3$ this is the sampled block size,
and for $M_1$ it is $\Delta_B$ for a buyer and $\Delta_S$ for a seller. If both
$q$ and $\hat q$ are below $\Delta$, the user is not retained under either
report and receives zero utility in both cases. If both reports are at least
$\Delta$, then the user is retained under both reports. Conditional on
retention, the mechanism uses only the fixed block size and the unit-value
ordering; the exact reported quantity above the threshold does not affect the
clearing index, branch, price, allocation, or payment. Hence the two reports
give identical utility.

The only remaining case is
\[
  q\ge \Delta>\hat q.
\]
Then the understated report removes the user from the retained set and gives
zero utility. Under truthful quantity reporting, feasibility gives an
allocation of at most the true quantity $q$. Since the unit value is truthful,
a truthful winning buyer pays no more than her value per unit, and a truthful
winning seller is paid no less than her value per unit and incurs no
over-supply penalty. Thus truthful quantity reporting gives non-negative
realized utility, while the understated report gives zero utility. Both WQIC
requirements hold for the arbitrary realization $\omega$, so $M$ satisfies
ex-post WQIC.
\hfill\qed
\end{proof}

\section{Proof of Theorem~\ref{thm:approx}}
\label{app:approx}

We first prove Lemma~\ref{lem:m1-threshold}, the local threshold guarantee for
the peer-to-peer submechanism $M_1$, and then use it to prove
Theorem~\ref{thm:approx}.

\begin{proof}[of Lemma~\ref{lem:m1-threshold}]
Write $N_B=|B_l|$ and $N_S=|S_t|$. Suppose first that
$\Delta_B\ge \Delta_S$ and set $r=\Delta_B/\Delta_S$. For every
\[
  h\le \min\{N_B,\lfloor N_S/r\rfloor\},
\]
the $h$th retained buyer has bid at least $p^\ast$, and the seller at position
$r(h-1)+1$ has ask at most $p^\ast$. Therefore
\[
  a_h=\hat s_{(r(h-1)+1)}\le p^\ast\le \hat b_{(h)},
\]
so the mechanism's clearing index satisfies
\[
  \kappa\ge \min\{N_B,\lfloor N_S/r\rfloor\}.
\]
Hence $M_1$ trades at least
\[
  \min\{N_B,\lfloor N_S/r\rfloor\}-1
\]
buyer blocks. Its one-sided matched volume is therefore at least
\[
  \Bigl(\min\{N_B,\lfloor N_S/r\rfloor\}-1\Bigr)\Delta_B.
\]
Since $\lfloor N_S/r\rfloor\Delta_B=\lfloor N_S/r\rfloor r\Delta_S\ge
N_S\Delta_S-\Delta_B$, this lower bound is at least
\[
  \min\{N_B\Delta_B,\ N_S\Delta_S\}-2\Delta_B
  =\min\{N_B\Delta_B,\ N_S\Delta_S\}-2\max\{\Delta_B,\Delta_S\}.
\]

Now suppose that $\Delta_B<\Delta_S$ and set $r=\Delta_S/\Delta_B$. For every
$h\le \min\{N_S,\lfloor N_B/r\rfloor\}$, the $h$th retained seller has ask at
most $p^\ast$, and the buyer at position $r(h-1)+1$ has bid at least
$p^\ast$. Thus $\hat s_{(h)}\le p^\ast\le
c_h=\hat b_{(r(h-1)+1)}$, so
\[
  \kappa\ge \min\{N_S,\lfloor N_B/r\rfloor\}.
\]
Hence $M_1$ trades at least $\min\{N_S,\lfloor N_B/r\rfloor\}-1$ seller blocks,
and its one-sided matched volume is at least
\[
  \Bigl(\min\{N_S,\lfloor N_B/r\rfloor\}-1\Bigr)\Delta_S.
\]
Because $\lfloor N_B/r\rfloor\Delta_S\ge N_B\Delta_B-\Delta_S$, this is at
least
\[
  \min\{N_S\Delta_S,\ N_B\Delta_B\}-2\Delta_S
  =\min\{N_B\Delta_B,\ N_S\Delta_S\}-2\max\{\Delta_B,\Delta_S\}.
\]
This proves the lemma.
\hfill\qed
\end{proof}

\begin{proof}[of Theorem~\ref{thm:approx}]
Fix an optimal feasible IR allocation at price $p$, with total buyer and
seller quantities $Q_B,Q_S$. Put $Q=\min\{Q_B,Q_S\}$,
$A_B=(Q_B-Q_S)_+$, and $A_S=(Q_S-Q_B)_+$, where $(a)_+=\max\{a,0\}$.
Decompose actual allocated quantities into a peer-to-peer component of
one-sided volume $Q$ and residual AMM components $A_B,A_S$. This can split
an individual allocation and requires no removal of partially served users.
In particular,
\[
 V^*=2Q+A_B+A_S.
\]
Every served buyer has value at least $p$ and every served seller has value
at most $p$. Let $n_k,m_k$ count these buyers and sellers in bucket $k$,
and define their retained truncated capacities
\[
 x_l=\sum_{k\ge l}n_k2^l\underline q,\qquad
 y_t=\sum_{k\ge t}m_k2^t\underline q.
\]

\emph{AMM components.}
For $A_B>0$, feasibility of the optimal allocation implies
$pA_B\ge C(A_B)$ and $A_B<x_0$. Average-price monotonicity implies
$pq\ge C(q)$ for $0\le q\le A_B$. For threshold block
$\Delta=2^l\underline q$, the integer
\[
 k=\min\{\lfloor A_B/\Delta\rfloor,\sum_{h\ge l}n_h-1\}
\]
is feasible for $M_2$ whenever $k\ge1$: its losing bid is at least $p$,
and $k\Delta\le A_B$. Consequently, including the case $k\le0$,
\begin{equation}\label{eq:amm-cap}
 V_{M_2}(l)\ge\min\{x_l,A_B\}-2\Delta.
\end{equation}
The cap by $A_B$ is essential; $x_l$ alone need not be AMM-feasible.
Let $w_k$ be the residual AMM allocation to buyers in bucket $k$.
Then $\sum_k w_k=A_B$ and $w_k\le n_k2^{k+1}\underline q$, including
the closed last bucket. Therefore
\[
 x_l\ge a_l:=\sum_{k\ge l}w_k2^{l-k-1},\qquad a_l\le A_B,
 \qquad \sum_l a_l=\sum_k w_k(1-2^{-k-1})\ge A_B/2.
\]
Thus $\sum_l\min\{x_l,A_B\}\ge A_B/2$, and averaging
\eqref{eq:amm-cap} gives
$\E[V_{M_2}]\ge A_B/(2L)-2\bar q$. If $A_B=0$ this inequality is
trivial. For $M_3$, feasibility gives $pA_S\le-C(-A_S)$, and
the average buy price is decreasing. Repeating the argument with sellers
gives $\E[V_{M_3}]\ge A_S/(2L)-2\bar q$.

\emph{Peer-to-peer component.}
Let $u_k,v_r$ be its buyer and seller bucket masses, both summing to $Q$.
Choose a non-negative transportation matrix $(q_{kr})$ with these row
and column sums. Since $u_k\le n_k2^{k+1}\underline q$ and
$v_r\le m_r2^{r+1}\underline q$, for every $(l,t)$,
\[
 x_l\ge\sum_{k\ge l,r\ge t}q_{kr}2^{l-k-1},\qquad
 y_t\ge\sum_{k\ge l,r\ge t}q_{kr}2^{t-r-1}.
\]
Using $\min\{\sum_e a_e,\sum_e b_e\}\ge\sum_e\min\{a_e,b_e\}$
for non-negative families and summing over thresholds,
\begin{align*}
 \sum_{l,t}\min\{x_l,y_t\}
 &\ge\frac12\sum_{k,r}q_{kr}
       \sum_{l=0}^k\sum_{t=0}^r\min\{2^{l-k},2^{t-r}\}\\
 &\ge\frac12\sum_{k,r}q_{kr}=Q/2.
\end{align*}
The last inequality uses the term $l=k,t=r$. By
Lemma~\ref{lem:m1-threshold}, the user-side volume at $(l,t)$ is at least
$2\min\{x_l,y_t\}-4\bar q$. Hence
$\E[V_{M_1}]\ge Q/L^2-4\bar q$.
Finally, each mode is selected with probability $1/3$, so
\[
 \E[V_M]\ge\frac{Q}{3L^2}+\frac{A_B+A_S}{6L}-\frac83\bar q
 \ge\frac{V^*}{6L^2}-\frac83\bar q.
\]
\end{proof}

\paragraph{The fixed-threshold fully IC variant.}
For completeness, fix every threshold at $\delta=\underline q$ and keep
uniform random selection of the three modes. Every permitted quantity
report is retained and has no effect on the outcome. Every allocation is
at most $\delta$, hence at most the true quantity. The value part of WQIC
and IR therefore give dominant truthful value reporting for every
realization, and quantity reports are immaterial. This proves full IC,
including joint deviations and expected-utility IC.

Use the same optimal-allocation decomposition as above, and let $n,m$
count its served buyers and sellers. Since each contributes at most
$\bar q$, we have $n\delta,m\delta\ge Q/\gamma$ and
$n\delta\ge A_B/\gamma$, $m\delta\ge A_S/\gamma$.
Lemma~\ref{lem:m1-threshold} gives peer user volume at least
$2Q/\gamma-4\delta$. The capped one-sided argument gives at least
$A_B/\gamma-2\delta$ in $M_2$ and $A_S/\gamma-2\delta$ in $M_3$,
because $\min\{n\delta,A_B\}\ge A_B/\gamma$ and symmetrically for
sellers. Averaging the modes yields
\[
 \E[V_{\mathrm{fixed}}]\ge \frac{V^*}{3\gamma}-\frac83\underline q.
\]
This supplies the claimed matching dependence on $\gamma$, up to constants
and the boundary loss, without changing the randomized-threshold mechanism.

\section{Full-IC Upper Bounds}
\label{app:lower-proof}

We first prove Theorem~\ref{thm:lower} and then prove its complementary
AMM-active version, Proposition~\ref{prop:lower-amm-active}.

\begin{proof}[Theorem~\ref{thm:lower}]
Fix an arbitrary vector of public penalty factors
$\boldsymbol{\lambda}=(\lambda_1,\ldots,\lambda_N)\in[1,\infty)^N$. Set
\[
  b_1=\rho,\qquad s_1=\frac{\rho}{2},\qquad
  s_{0j}=\frac{\rho}{6\lambda_j}\quad\text{for each seller }j,
\]
so $s_1-\lambda_j s_{0j}=\rho/3$ for every $j$. Let
$q_0=\underline q$ and
$q_1=\bar q=\gamma q_0$. All buyers in the profiles below have high type
$\theta_B^1=(b_1,q_1)$, and all sellers have high type
$\theta_S^1=(s_1,q_1)$. For each seller $j$, define the auxiliary low type
$\theta_{S,j}^0=(s_{0j},q_0)$. For the prescribed $R>0$ and $\rho>0$, choose
AMM reserves $x_0=R$ and $y_0=b_1R=\rho R$.
These auxiliary low types may differ across sellers, but each is used only in
the corresponding profile $L_j$ below; the hard profile $H$ retains the same
high type for every seller.

Let $H$ be the all-high profile with $N$ high buyers and $N$ high sellers; this
will be the instance $I_{\gamma,N,R,\rho,\boldsymbol{\lambda}}$. In $H$, the
benchmark can match every buyer and every seller peer-to-peer for $q_1$ units
at any price in $[s_1,b_1]$, leaving the AMM unchanged. Hence
\[
  V^\ast(H)=2Nq_1=2N\bar q.
\]

Fix a seller $j$. Let $L_j$ be the profile obtained from $H$ by changing only
seller $j$ to her auxiliary low type $\theta_{S,j}^0$. Write $z_j^H,t_j^H$
for seller $j$'s random allocation and receipt in $H$, and $z_j^L,t_j^L$ for
the corresponding random variables when the low seller reports truthfully in
$L_j$.

First bound the low seller's truthful utility
\[
  U_j^L=\E[t_j^L-s_{0j}z_j^L].
\]
Since seller $j$ reports capacity $q_0$ in $L_j$, feasibility gives
$z_j^L\le q_0$ in every realization. If some buyer receives a positive
allocation in that realization, buyer realized IR and value $b_1$ imply
$\pi\le b_1$, and so $t_j^L=\pi z_j^L\le b_1q_0$. If no buyer receives a
positive allocation but seller $j$ does, the AMM net buys
$D=Q_S-Q_B=Q_S>0$. LND feasibility gives
\[
  (x_0+D)(y_0-\pi D)\ge x_0y_0,
\]
and hence
\[
  \pi\le \frac{y_0}{x_0+D}<\frac{y_0}{x_0}=b_1.
\]
Thus, realization by realization,
\[
  t_j^L-s_{0j}z_j^L\le b_1q_0,
  \qquad\text{and hence}\qquad
  U_j^L\le b_1q_0.
\]

Full IC for seller $j$ in $L_j$ says that the low seller cannot gain by
reporting $\theta_S^1$. Under that deviation the reported profile is exactly
$H$, while the true type remains $(s_{0j},q_0)$ and the public factor remains
$\lambda_j$. Therefore
\[
\begin{aligned}
  U_j^L
  &\ge
  \E\!\left[
    t_j^H-s_{0j}z_j^H
    -(\lambda_j-1)s_{0j}\max\{z_j^H-q_0,0\}
  \right]  \\
  &\ge \E[t_j^H-\lambda_j s_{0j}z_j^H].
\end{aligned}
\]
In $H$, seller $j$ is a truthful high seller. Realized IR gives
$t_j^H-s_1z_j^H\ge 0$ in every realization. Combining the last two displays,
\[
  b_1q_0
  \ge \E[t_j^H-\lambda_j s_{0j}z_j^H]
  \ge (s_1-\lambda_j s_{0j})\E[z_j^H].
\]
Since $s_1-\lambda_j s_{0j}=\rho/3$ and $b_1=\rho$,
\[
  \E[z_j^H]\le 3q_0.
\]
Summing over all sellers gives
\[
  \E\!\left[\sum_{j=1}^N z_j^H\right]\le 3Nq_0.
\]

In each realization of $H$, let $Q_B=\sum_i z_i^H$ and
$Q_S=\sum_j z_j^H$. Suppose, toward a contradiction, that $Q_B>Q_S$, and set
$D=Q_B-Q_S>0$. Some buyer then receives a positive allocation, so ex-post IR
implies $\pi\le b_1$. On the other hand, using $y_0=b_1x_0$, LND feasibility
requires
\[
  (x_0-D)(b_1x_0+\pi D)\ge b_1x_0^2.
\]
After expansion, this is
\[
  D\bigl((\pi-b_1)x_0-\pi D\bigr)\ge0.
\]
But reserve non-negativity and $\pi\le b_1$ imply
\[
  (\pi-b_1)x_0-\pi D
  =\pi(x_0-D)-b_1x_0
  \le -b_1D<0,
\]
a contradiction. Therefore $Q_B\le Q_S$ in every realization. Taking
expectations,
\[
\begin{aligned}
  \E[V_{M'}(H)]
  &=\E[Q_B+Q_S] \\
  &\le 2\E[Q_S]
   \le 6Nq_0.
\end{aligned}
\]
Dividing by $V^\ast(H)=2Nq_1=2N\gamma q_0$ gives
\[
  \frac{\E[V_{M'}(H)]}{V^\ast(H)}
  \le \frac{3}{\gamma}.
\]
The prescribed $R$ and $\rho$ were arbitrary, so the bound holds at every
reserve scale and every prescribed reserve ratio. The construction sets the
hard-instance buyers' common unit value equal to $y_0/x_0=\rho$; IR then
prevents a positive net AMM sale to buyers. The user profile therefore depends
on the prescribed ratio.
\hfill\qed
\end{proof}

\begin{proof}[Proposition~\ref{prop:lower-amm-active}]
Fix an arbitrary vector of public penalty factors
$\boldsymbol{\lambda}=(\lambda_1,\ldots,\lambda_N)\in[1,\infty)^N$. Let
\[
  q_1=\bar q=\frac{R}{N},\qquad
  q_0=\underline q=\frac{q_1}{\gamma},\qquad
  s_1=\frac{\rho}{2},\qquad
  s_{0j}=\frac{\rho}{6\lambda_j}\quad\text{for each seller }j.
\]
These seller-specific low types are used only in the auxiliary profiles
$L_j$ below; all sellers in the hard profile $H$ have the same high type.
There are no buyers. Let
$H=I^{\mathrm{AMM}}_{\gamma,N,R,\rho,\boldsymbol{\lambda}}$ be the profile of $N$
sellers, each with high type $(s_1,q_1)$ and public penalty factor $\lambda_j$,
and let the initial AMM state be $(x_0,y_0)=(R,\rho R)$.
At the uniform price $\pi=\rho/2$, all sellers can sell $q_1$ units to the AMM.
The total sale is $Nq_1=R$, and the final AMM state is
$(2R,\rho R/2)$, whose reserve product equals the initial product. Thus this
allocation is feasible and IR. Seller capacities also give the matching upper bound, so
\[
  V^\ast(H)=Nq_1=R.
\]

Fix a seller $j$, and let $L_j$ be the profile obtained by changing only this
seller to her auxiliary low type $(s_{0j},q_0)$. Under truthful reporting in
$L_j$, feasibility gives $z_j^L\le q_0$. If seller $j$ trades in a
realization, the AMM's total net purchase $Q_S$ is positive, and LND
feasibility implies
\[
  (R+Q_S)(\rho R-\pi Q_S)\ge\rho R^2,
  \qquad\text{hence}\qquad
  \pi\le\frac{\rho R}{R+Q_S}<\rho.
\]
Consequently the low seller's expected truthful utility satisfies
$U_j^L\le\rho q_0$.

Full IC for the deviation from $(s_{0j},q_0)$ to $(s_1,q_1)$, with the public
factor $\lambda_j$ held fixed, followed by
ex-post IR for the high type in $H$, gives
\[
\begin{aligned}
  \rho q_0
  &\ge U_j^L \\
  &\ge \E\!\left[t_j^H-s_{0j}z_j^H
       -(\lambda_j-1)s_{0j}\max\{z_j^H-q_0,0\}\right] \\
  &\ge \E[t_j^H-\lambda_j s_{0j}z_j^H] \\
  &\ge (s_1-\lambda_j s_{0j})\E[z_j^H]
   =\frac{\rho}{3}\E[z_j^H].
\end{aligned}
\]
Therefore $\E[z_j^H]\le3q_0$. Summing over the $N$ sellers and using the
absence of buyers,
\[
  \E[V_{M'}(H)]\le3Nq_0
  =\frac{3}{\gamma}Nq_1
  =\frac{3}{\gamma}V^\ast(H).
\]
All benchmark volume in this construction is seller-to-AMM volume.
\hfill\qed
\end{proof}

\end{document}